\documentclass[12pt]{amsart}
\usepackage{amsmath,amssymb,amsfonts,amsbsy,mathtools,cite,setspace}
\usepackage{graphicx}
\usepackage{amssymb}
\usepackage{amsthm}
\usepackage{enumerate}
\usepackage{wasysym}
\usepackage{cite}
\usepackage{tikz}

\linespread{1}

\newtheorem{defn}{Definition}
\newtheorem{prop}{Proposition}
\newtheorem{thm}{Theorem}

\newtheorem{lem}{Lemma}

\newtheorem{cor}{Corollary}

\newcommand{\C}{{\mathbb C}}
\begin{document}
\title[An Uncertainty Principle for Completely Positive Maps]{An Uncertainty Principle for Completely Positive Maps}
\author{Jeremy Levick}
\maketitle

\section{Uncertainty For CP Maps}
\subsection{Completely Positive Maps}
\begin{defn} A completely positive map $\Phi: M_n(\mathbb{C})\rightarrow M_m(\mathbb{C})$ is a linear map that preserves positivity, $a\geq 0 \Rightarrow \Phi(a)\geq 0$, and whose actions on the blocks of a $kn \times kn$ block matrix are positive for all $k \in \mathbb{N}$:\\
$$A = \sum_{i,j=1}^k E_{ij}\otimes A_{ij} \geq 0 \Rightarrow \sum_{i,j=1}^k E_{ij}\otimes \Phi(A_{ij})\geq 0 \ \ \forall k\in \mathbb{N}$$. 
\end{defn}

A theorem of Choi \cite{choilinear} asserts that any completely positive map can be characterized in terms of its \emph{Kraus operators}: $\{K_i\}_{i=1}^p$ are $m\times n$ matrices such that 
\begin{equation} \Phi(X) = \sum_{i=1}^p K_i X K_i^*\end{equation} for all $X\in M_n(\mathbb{C})$. The Kraus operators of a completely positive map are not necessarily unique. \\
In order to state Choi's theorem, we need to define the Choi matrix of a completely positive map.
\begin{defn} Let $\Phi:M_n(\C)\rightarrow M_m(\C)$ be a completely positive map. The Choi matrix of $\Phi$, $C_{\Phi}\in M_{nm}(\C)$ is defined by 
$$C_{\Phi} = \sum_{i,j=1}^n E_{ij}\otimes \Phi(E_{ij}).$$
\end{defn}
\begin{thm}[Choi] A map $\Phi: M_n(\C)\rightarrow M_m(\C)$ is completely positive if and only if its Choi matrix is positive semidefinite.\\
Moreover, if we let $\hat{v}_i \in \mathbb{C}^{nm}$ be a set of orthonormal eigenvectors for $C_{\Phi}$ with associated eigenvalues $\lambda_i\neq 0$, and define $$v_i = \sqrt{\lambda_i}\hat{v}_i = \bigoplus_{j=1}^m v_{ij}$$ where $v_{ij}\in \mathbb{C}^n$, then a set of Kraus operators for $\Phi$ is $\{K_i\}_{i=1}^p$ where 
$$K_i = \sum_{j=1}^m v_{ij}e_j^*.$$ 
\end{thm}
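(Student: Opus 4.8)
The plan is to prove the two directions separately and to extract the explicit Kraus operators as a byproduct of the harder (``if'') direction. For the forward direction I would simply feed one cleverly chosen positive matrix into the definition of complete positivity; for the converse I would spectrally decompose the (now positive) Choi matrix, read off the candidate Kraus operators from its eigenvectors, and verify that the resulting map agrees with $\Phi$ on a basis. The observation that any map written in Kraus form is automatically completely positive then closes the loop.

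For the direction ``$\Phi$ completely positive $\Rightarrow C_\Phi\geq 0$'', first I would observe that the matrix $\sum_{i,j=1}^n E_{ij}\otimes E_{ij}\in M_n(\mathbb{C})\otimes M_n(\mathbb{C})$ is positive semidefinite, since it equals $\omega\omega^*$ for the unnormalized maximally entangled vector $\omega=\sum_{i=1}^n e_i\otimes e_i$. Applying the definition of complete positivity with $k=n$ and $A_{ij}=E_{ij}$ then yields $\sum_{i,j=1}^n E_{ij}\otimes\Phi(E_{ij})=C_\Phi\geq 0$ directly, with no further work.

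For the converse, together with the ``moreover'' claim, I would start from $C_\Phi\geq 0$ and invoke the spectral theorem to write $C_\Phi=\sum_{i=1}^p\lambda_i\hat v_i\hat v_i^*=\sum_{i=1}^p v_iv_i^*$, with $v_i=\sqrt{\lambda_i}\hat v_i$ exactly as in the statement. The key step is the ``reshaping'' correspondence that turns a vector $v_i\in\mathbb{C}^{nm}$, sliced into its blocks $v_{ij}$, into the matrix $K_i=\sum_j v_{ij}e_j^*$; under this correspondence a rank-one summand $v_iv_i^*$ of the Choi matrix should match the elementary map $X\mapsto K_iXK_i^*$. To make this precise I would check the identity $\Phi(X)=\sum_i K_iXK_i^*$ on the matrix units $X=E_{ab}$, using that $\Phi(E_{ab})$ is precisely the $(a,b)$ block of $C_\Phi$; expanding $\sum_i K_iE_{ab}K_i^*$ entrywise should reproduce exactly the corresponding block of $\sum_i v_iv_i^*$, namely $\Phi(E_{ab})$. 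Linearity then extends the identity to all of $M_n(\mathbb{C})$.

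Finally, to recover complete positivity I would note that any map of the form $X\mapsto\sum_i K_iXK_i^*$ is completely positive: each single term $\Psi_i(X)=K_iXK_i^*$ satisfies $(\mathrm{id}_k\otimes\Psi_i)(A)=(I_k\otimes K_i)A(I_k\otimes K_i)^*$, which is positive whenever $A$ is (write $A=M^*M$), and a sum of completely positive maps is completely positive. The main obstacle I anticipate is purely bookkeeping: keeping the tensor-factor orderings and the block/index conventions consistent among the definition $C_\Phi=\sum E_{ij}\otimes\Phi(E_{ij})$, the slicing $v_i=\bigoplus_j v_{ij}$, and the matrix $K_i=\sum_j v_{ij}e_j^*$, so that the entrywise check in the previous step lines up without a spurious transpose.
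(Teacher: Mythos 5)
The paper does not actually prove this theorem; it defers entirely to Choi's original article, so there is no internal proof to compare against. Your argument is the standard (indeed, essentially Choi's own) proof and is correct as outlined: the rank-one positive matrix $\omega\omega^*$ with $\omega=\sum_i e_i\otimes e_i$ forces $C_\Phi\geq 0$ from complete positivity, the spectral decomposition of $C_\Phi$ plus reshaping yields the Kraus operators after checking the identity on matrix units, and the observation that $(\mathrm{id}_k\otimes\Psi_i)(A)=(I_k\otimes K_i)A(I_k\otimes K_i)^*$ shows any Kraus-form map is completely positive. The bookkeeping worry you flag is the one real subtlety, and your instinct resolves it the right way: since $C_\Phi=\sum_{i,j=1}^n E_{ij}\otimes\Phi(E_{ij})$ sits in $M_n(\C)\otimes M_m(\C)$, the eigenvector $v_i\in\C^{nm}$ should be sliced into $n$ blocks of size $m$, giving $K_i=\sum_{a=1}^n v_{ia}e_a^*$ of size $m\times n$ with $K_iE_{ab}K_i^*=v_{ia}v_{ib}^*$ equal to the $(a,b)$ block of $v_iv_i^*$; the slicing printed in the statement (into $m$ blocks of size $n$) would produce $n\times m$ matrices and does not line up, so your convention is the one under which the entrywise verification closes.
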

See \cite{choilinear} for Choi's proof of this. \\
The relationship between the eigenvectors of $C_{\Phi}$ and the Kraus operators of $\Phi$ is an instance of the Choi-Jamiolkowski isomorphism:\\
\begin{defn} Let $A,B\in M_n(\C)$ and let $a,b$ be their representations as vectors in the space $\mathbb{C}^{n^2}\simeq M_n(\C)$, using the standard ordered basis $\{E_{ij}\}_{i=1,j=1}^{n \ \ \ n}$; that is, their vectorizations: $a=\mathrm{vec}(A)$, $b=\mathrm{vec}(B)$. The \emph{Choi-Jamiolkowski isomorphism} is the linear involution on $M_{n^2}$ sending $\overline{B}\otimes A$ to $ab^*$.
\end{defn}
The fact that this map is indeed an involution is easy to see from the fact that $ab^*$ and $\overline{B}\otimes A$ clearly have the same entries, and thus must be related by some permutation of their entries. A simple computation shows that this permutation is in fact an involution.
The Choi-Jamiolkowski isomorphism interchanges between the representing map of $\Phi$, $T_{\Phi}$ (see Equation \ref{rep} in the next section), and the Choi matrix of $\Phi$, $C_{\Phi}$; and it interchanges between rank one matrices $ab^*$ and pure tensor products $\overline{B}\otimes A$. 
\\
Any completely positive map $\Phi: M_n(\C)\rightarrow M_m(\C)$ has an adjoint map $\Phi^{\dagger}:M_m(\C)\rightarrow M_n(\C)$, defined through the trace inner product:
\begin{equation} \mathrm{Tr}(\Phi(X)^*Y) = \mathrm{Tr}(X^*\Phi^{\dagger}(Y)).\end{equation}
Writing $\Phi(X) = \sum_{i=1}^p K_iXK_i^*$ it is easy to see that \begin{equation} \Phi^{\dagger}(Y) = \sum_{i=1}^p K_i^*YK_i.\end{equation}

\subsection{Complements of CP Maps}
A completely positive map is \emph{trace-preserving} if 
\begin{equation} \mathrm{Tr}(\Phi(X)) = \mathrm{Tr}(X) \end{equation} for all $X\in M_n(\mathbb{C})$. This is easily seen to be equivalent to the condition that 
\begin{equation} \sum_{i=1}^p K_i^*K_i = I_n.\end{equation}
Notice that such a map is necessarily the adjoint of a unital completely-positive map: one for which
\begin{equation} \Phi^{\dagger}(I) = \sum_{i=1}^p K_i^*K_i = I.\end{equation}

The Stinespring dilation theorem, \cite{stinespring}, characterizes trace-preserving completely positive maps as follows:

\begin{thm}[Stinespring] If $\Phi:M_n(\mathbb{C})\rightarrow M_m(\mathbb{C})$ is a trace-preserving completely positive map, there exists an integer $p \leq nm$ and a unitary $U$ on $\C^n\otimes \C^p$ such that
\begin{equation} \Phi(X) = (\mathrm{Id}\otimes \mathrm{Tr})\bigl( U(X\otimes E_{11})U^*\bigr).\end{equation}
\end{thm}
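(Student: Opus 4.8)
The plan is to construct the unitary dilation explicitly from a Kraus decomposition. By Choi's theorem the Choi matrix $C_\Phi \in M_{nm}$ is positive semidefinite; its nonzero eigenvectors furnish Kraus operators $\{K_i\}_{i=1}^p$ with $\Phi(X)=\sum_{i=1}^p K_i X K_i^*$, and since $p$ equals the rank of $C_\Phi$ we obtain the advertised bound $p\le nm$ for free. The trace-preserving hypothesis is equivalent to $\sum_{i=1}^p K_i^* K_i = I_n$, and this single identity is the engine of the whole argument: it is exactly the condition needed to turn the $K_i$ into an isometry, which I will then dilate to a unitary. (The displayed form, in which $U$ lives on $\C^n\otimes\C^p$ and the partial trace returns an element of $M_m$, is the balanced case $m=n$; for $m\ne n$ the $K_i$ are $m\times n$ and the identical construction produces an isometry into $\C^m\otimes\C^p$, with no change in the reasoning.)

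First I would introduce the candidate Stinespring isometry $V:\C^n\to\C^n\otimes\C^p$ defined by $V\psi=\sum_{i=1}^p (K_i\psi)\otimes e_i$, where $\{e_i\}_{i=1}^p$ is the standard basis of $\C^p$. A one-line calculation gives $V^*V=\sum_{i,j} K_i^* K_j \langle e_i,e_j\rangle = \sum_{i=1}^p K_i^* K_i = I_n$, so the trace-preserving identity is precisely what makes $V$ an isometry. I would then verify the recovery identity $(\mathrm{Id}\otimes\Tr)(VXV^*)=\Phi(X)$: expanding $VXV^*=\sum_{i,j}(K_i X K_j^*)\otimes e_i e_j^*$ and tracing out the second tensor factor annihilates the off-diagonal terms, via $\Tr(e_i e_j^*)=\delta_{ij}$, and leaves $\sum_{i=1}^p K_i X K_i^* = \Phi(X)$.

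Next I would upgrade the isometry to the unitary demanded by the statement, with the ancilla prepared in the pure state $E_{11}$. The key identification is $X\otimes E_{11}=WXW^*$, where $W:\C^n\to\C^n\otimes\C^p$ is the embedding $W\psi=\psi\otimes e_1$; this is immediate from $E_{11}=e_1 e_1^*$. I would then define $U$ on the $n$-dimensional subspace $\mathrm{ran}(W)=\C^n\otimes e_1$ by $U(W\psi)=V\psi$, i.e. $UW=V$. Since $V$ and $W$ are both isometries this prescription is a well-defined linear isometry from $\mathrm{ran}(W)$ onto $\mathrm{ran}(V)$, and because $\mathrm{ran}(W)^\perp$ and $\mathrm{ran}(V)^\perp$ both have dimension $np-n$, I can extend $U$ to a genuine unitary on all of $\C^n\otimes\C^p$ by sending the one complement onto the other by any unitary. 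With $UW=V$ established, the theorem falls out at once: $U(X\otimes E_{11})U^*=UWX W^*U^* = (UW)X(UW)^* = VXV^*$, and applying $\mathrm{Id}\otimes\Tr$ returns $\Phi(X)$.

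I expect the main obstacle to be this dilation step, namely promoting the isometry $V$ to an honest unitary while keeping the input in the rigid form $X\otimes E_{11}$. Its real content is dimensional bookkeeping: one must know both that the ancilla can be taken with $p\le nm$ (handed to us by the rank of $C_\Phi$) and that the two orthogonal complements being matched share a common dimension, which is exactly what guarantees the unitary extension exists. By contrast, the isometry property and the partial-trace computation are routine once $\sum_{i=1}^p K_i^* K_i = I_n$ is in hand, so the trace-preserving hypothesis does essentially all of the analytic work.
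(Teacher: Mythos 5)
Your proposal is correct and follows essentially the same route as the paper: build the isometry $\sum_{i=1}^p K_i\otimes e_i$ from a Kraus decomposition (with $\sum_i K_i^*K_i=I_n$ making it an isometry), complete it to a unitary $U$ whose first block-column is that isometry, and check that $U(X\otimes E_{11})U^*=\sum_{i,j}K_iXK_j^*\otimes E_{ij}$ traces down to $\Phi(X)$. The only differences are that you spell out the dimension count justifying the unitary extension and flag the $m\neq n$ bookkeeping, both of which the paper leaves implicit.
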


The theorem follows from the fact that $$\sum_{i=1}^p K_i \otimes e_i,$$ where $\{K_i\}_{i=1}^p$ are Kraus operators for $\Phi$, is an isometry, and so can be completed to a unitary $$U = \sum_{i=1}^p K_i\otimes E_{i1} +\sum_{i=1,j=2}^p U_{ij}\otimes E_{ij}$$ on $\C^n\otimes \C^p$. Then 

\begin{align} U(X\otimes E_{11})U^* & = \sum_{i=1}^p K_i X K_j^* \otimes E_{i1}E_{1j} \\
&\label{Stines} = \sum_{ij} K_i X K_j^*\otimes E_{ij}.\end{align}
Tracing out the second system results in $\Phi(X)$. 

\begin{defn} Given a trace-preserving completely positive map $\Phi:M_n(\mathbb{C})\rightarrow M_m(\mathbb{C})$ with Stinespring dilation
$$ \Phi(X) = (\mathrm{Id}\otimes \mathrm{Tr})\bigl( U(X\otimes E_{11})U^*\bigr),$$
the \emph{complementary channel} $\Phi^C$ is the channel defined by 
$$\Phi^C(X) = (\mathrm{Tr}\otimes \mathrm{Id})\bigl(U(X\otimes E_{11})U^*\bigr).$$
Clearly $\Phi^C$ has domain $M_n(\mathbb{C})$ and codomain $M_p(\mathbb{C})$. 
\end{defn}
From Equation \ref{Stines}, it is easy to see that 
\begin{equation}\label{complement} \Phi^C(X) = \sum_{i,j=1}^p \mathrm{Tr}(K_j^*K_i X)E_{ij}.\end{equation}
It is easy to see that if $\Phi$ is trace-preserving, so is $\Phi^C$:
\begin{align} \mathrm{Tr}(\Phi^C(X)) & = \sum_{i,j=1}^p \mathrm{Tr}(K_j^*K_i X)\mathrm{Tr}(E_{ij}) \\
& = \sum_{i=1}^p \mathrm{Tr}(K_i^*K_i X) \\
& = \mathrm{Tr}( \sum_{i=1}^p K_i^*K_i X) \\
& = \mathrm{Tr}(X).\end{align}
Our notion of complementarity is the one used in \cite{kks}, deriving ultimately from \cite{devshor}.

Even if $\Phi$ is not trace-preserving, and thus not subject to the Stinespring theorem, we may still use Equation \ref{complement} to define a complementary channel.

\begin{prop}\label{opsofcomp} Let $\Phi$ be a completely positive map $M_n(\C)\rightarrow M_m(\C)$, with Kraus operators $\{K_i\}_{i=1}^p$. Let $K_{ij}$, $1\leq j\leq n$, $1\leq i\leq p$ be the $j^{th}$ row of $K_i$, so that 
$$K_i = \sum_{j=1}^m e_jK_{ij}.$$
Then the Kraus operators of $\Phi^C$ can be chosen to be $\{L_i\}_{i=1}^n$ given by 

$$L_i = \sum_{j=1}^p e_jK_{ji}.$$
\end{prop}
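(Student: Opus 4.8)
The plan is to verify the defining Kraus identity for $\Phi^C$ directly, namely that $\Phi^C(X) = \sum_a L_a X L_a^*$ for every $X \in M_n(\C)$. Once this identity holds, the $L_a$ are by the very definition of Kraus operators (Equation (1)) a legitimate choice for $\Phi^C$, so no separate appeal to Choi's theorem is needed. My starting point is the explicit description of the complement in Equation \ref{complement}, $\Phi^C(X) = \sum_{i,j=1}^p \Tr(K_j^* K_i X)\, E_{ij}$, and the task reduces to rewriting its scalar coefficients in terms of the rows $K_{ia}$ of the $K_i$.

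First I would expand the coefficient $\Tr(K_j^* K_i X)$ using the row decomposition $K_i = \sum_a e_a K_{ia}$. Because the $e_a$ are orthonormal, the inner products collapse and $K_j^* K_i = \sum_a K_{ja}^* K_{ia}$, where $K_{ja}^*$ is a column of length $n$ and $K_{ia}$ a row of length $n$, so each $K_{ja}^* K_{ia}$ is $n \times n$. The crucial observation is that the rearranged product $K_{ia} X K_{ja}^*$ is a $1 \times 1$ scalar; hence by cyclicity of the trace $\Tr(K_{ja}^* K_{ia} X) = \Tr(K_{ia} X K_{ja}^*) = K_{ia} X K_{ja}^*$, which gives $\Tr(K_j^* K_i X) = \sum_a K_{ia} X K_{ja}^*$.

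Next I would reassemble. Substituting this back into Equation \ref{complement} gives $\Phi^C(X) = \sum_a \sum_{i,j=1}^p (K_{ia} X K_{ja}^*)\, E_{ij}$, and it remains only to recognize the inner double sum as a single conjugation. With $L_a = \sum_{i=1}^p e_i K_{ia}$, where now $e_i$ is the standard basis of $\C^p$, one computes $L_a X L_a^* = \sum_{i,j} K_{ia} X K_{ja}^*\, e_i e_j^* = \sum_{i,j} (K_{ia} X K_{ja}^*)\, E_{ij}$, which is exactly the inner sum over $i,j$. Summing over $a$ then yields $\Phi^C(X) = \sum_a L_a X L_a^*$, completing the argument (after relabeling the complementary index $a$ and the Kraus index back into the notation of the statement).

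I do not anticipate any genuine obstacle: this is essentially a bookkeeping computation whose entire content is the cyclicity trick that converts each matrix coefficient $\Tr(K_j^* K_i X)$ into a sum of scalars $K_{ia} X K_{ja}^*$. The points requiring care are purely dimensional: reading $K_{ia}$ consistently as a $1 \times n$ row (so that $K_{ja}^* K_{ia}$ is $n \times n$ while $K_{ia} X K_{ja}^*$ is a scalar), placing the outer products $e_i e_j^* = E_{ij}$ in $M_p$, and tracking that the environment index $a$ runs over the $m$ rows of the $K_i$, so that the complement is produced with one Kraus operator per row of the original Kraus operators.
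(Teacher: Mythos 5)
Your proof is correct, but it takes a genuinely different route from the paper's. The paper proves this proposition via the Stinespring dilation: it composes the dilating unitary $U$ with the swap $S$ on the tensor factors, observes that $\Phi^C(X) = (\mathrm{Id}\otimes\mathrm{Tr})\bigl(SU(X\otimes E_{11})U^*S^*\bigr)$, and reads the $L_j$ off the first block-column of $SU$. You instead verify the Kraus identity $\Phi^C(X)=\sum_a L_aXL_a^*$ directly from the coordinate formula $\Phi^C(X)=\sum_{i,j}\mathrm{Tr}(K_j^*K_iX)E_{ij}$ of Equation \ref{complement}, using the row decomposition $K_j^*K_i=\sum_a K_{ja}^*K_{ia}$ and cyclicity of the trace to turn each coefficient into the sum of scalars $\sum_a K_{ia}XK_{ja}^*$, then reassembling the double sum over $i,j$ as the conjugation $L_aXL_a^*$. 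Each step checks out dimensionally. What your approach buys is that it uses only Equation \ref{complement} itself, which the paper explicitly adopts as the \emph{definition} of $\Phi^C$ even when $\Phi$ is not trace-preserving; the paper's Stinespring argument, by contrast, strictly speaking presupposes that $\sum_i K_i\otimes e_i$ is an isometry, so your computation covers the general completely positive case more cleanly. What the paper's approach buys is conceptual transparency: it makes visible that complementation is literally the swap of the two tensor factors in the dilation. One small point of bookkeeping: you correctly note that the environment index runs over the $m$ rows of the $K_i$, whereas the statement's ranges ($1\leq j\leq n$ versus the sum to $m$, and $\{L_i\}_{i=1}^n$) are internally inconsistent in the paper; your reading is the dimensionally coherent one, and the discrepancy is the paper's, not yours.
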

\begin{proof} This is easiest to see from the Stinespring dilation. Let $\Phi$ have Stinespring dilation as in \ref{Stines}. Let $S:\C^n\otimes \C^k$ be the unitary that interchanges tensor factors:
\begin{equation} S(v\otimes w)=w\otimes v\end{equation}
for $v\in \mathbb{C}^n$, $w\in \C^k$. Then 
\begin{align} \Phi^C(X) & = (\mathrm{Tr}\otimes \mathrm{Id})\bigl(U(X\otimes E_{11})U^*\bigr) \\
& = (\mathrm{Id}\otimes \mathrm{Tr})\bigl(SU(X\otimes E_{11})U^*S^*\bigr)\end{align}
so a set of Kraus operators for $\Phi^C$ can be found by looking at the first block-column of the unitary matrix $SU$. This first row of $U$ is 
$$\sum_{i=1}^p K_i \otimes e_i = \sum_{i=1}^p\sum_{j=1}^n e_jK_{ij}\otimes e_i$$ so the first row of $SU$ is 
$$S\sum_{i=1}^p\sum_{j=1}^n e_jK_{ij}\otimes e_i = \sum_{i=1}^p\sum_{j=1}^n e_iK_{ij}\otimes e_j$$ and hence $L_j = \sum_{i=1}^p e_iK_{ji}$ for $1\leq j \leq n$.
\end{proof}

\subsection{The Operator System of a Completely Positive Map}
\begin{defn} An operator system is a subspace $S\subseteq M_n(\C)$ that contains the identity and is closed under $*$.
\end{defn}
The book \cite{paulsen} is a good reference for more on operator spaces. 
To each trace-preserving completely positive map $\Phi:M_n(\C)\rightarrow M_m(\C)$, we can assign an operator system $S_{\Phi}$ as follows:
\begin{equation} S_{\Phi} = \mathrm{span}\{ K_i^*K_j: \ \Phi(X) = \sum_{i=1}^p K_i X K_i^*\}.\end{equation}

Let $\Phi^C$ be the complementary channel to $\Phi$, and let $\Phi^{C\dagger}$ be the adjoint of the complement. Then
\begin{align}\mathrm{Tr}(\Phi^{C\dagger}(E_{ij})^*X) & = \mathrm{Tr}(E_{ji}\Phi^C(X))\\
& = \mathrm{Tr}(K_i^*K_jX)\end{align} for all $X\in M_n(\C)$, $1\leq i,j\leq m$, where we used Equation \ref{complement} to go to the second line. Hence 
\begin{equation}\Phi^C(E_{ij})=K_i^*K_j\end{equation}
and so $S_{\Phi}=\mathrm{span}\{K_i^*K_j\} = \mathrm{range}(\Phi^{C\dagger})$. 

\subsection{Uncertainty Principle for CP Maps}
Since a completely positive map $\Phi:M_n(\mathbb{C})\rightarrow M_m(\mathbb{C})$ is a linear map on the vector space of $n\times n$ matrices, it can be represented by an $n^2\times n^2$ matrix acting on the vector space $M_n(\mathbb{C})$ with ordered basis $\{E_{ij}\}_{i=1,j=1}^{n \ \ \ n}$. It is an easy computation that multiplication of the standard basis for $M_n(\mathbb{C})$ by the matrix $A$ has representation $L_A = I\otimes A$, and multiplication on the right is represented by $R_A = A^T\otimes I$. Then the representing matrix of $\Phi$, $T_{\Phi}$ is given by 
\begin{equation}\label{rep} T_{\Phi} = \sum_{i=1}^p \overline{K}_i \otimes K_i.\end{equation}

We will make use of the following lemma in our proof of the main theorem:
\begin{lem}\label{singvals} For a matrix $A\in M_n(\mathbb{C})$ with singular values $\|A\|=\sigma_1\geq \sigma_2 \geq \ldots \geq \sigma_n$,
\begin{equation} \|A\|^2 \mathrm{rank}(A) \geq \mathrm{Tr}(AA^*).\end{equation}
\end{lem}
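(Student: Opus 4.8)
The plan is to reduce the statement to the singular value decomposition of $A$ and then compare the two sides term by term. First I would write $A = U\Sigma V^*$ with $U,V$ unitary and $\Sigma = \mathrm{diag}(\sigma_1,\ldots,\sigma_n)$, so that $AA^* = U\Sigma\Sigma^*U^*$. Because the trace is invariant under unitary conjugation, this immediately gives
$$\mathrm{Tr}(AA^*) = \mathrm{Tr}(\Sigma\Sigma^*) = \sum_{i=1}^n \sigma_i^2,$$
identifying the right-hand side of the inequality with the sum of the squared singular values.

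Next I would invoke the standard fact that $\mathrm{rank}(A)$ equals the number $r$ of nonzero singular values, so that $\sigma_{r+1} = \cdots = \sigma_n = 0$ and the sum above collapses to $\sum_{i=1}^r \sigma_i^2$. Since the operator norm is exactly the largest singular value, $\|A\| = \sigma_1$, and the singular values are ordered in decreasing fashion, every surviving term obeys $\sigma_i^2 \leq \sigma_1^2 = \|A\|^2$. Summing these $r = \mathrm{rank}(A)$ inequalities yields
$$\mathrm{Tr}(AA^*) = \sum_{i=1}^r \sigma_i^2 \leq r\,\sigma_1^2 = \mathrm{rank}(A)\,\|A\|^2,$$
which is precisely the claimed bound.

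I do not expect any genuine obstacle: the only facts being used are that $\mathrm{Tr}(AA^*)$ equals the sum of the squared singular values (a consequence of the unitary invariance of the trace) and that the rank counts the nonzero singular values, both of which follow directly from the SVD. The inequality is then just the observation that a sum of $r$ nonnegative quantities, each at most $\sigma_1^2$, is at most $r\sigma_1^2$. If anything, the only care needed is to note equality holds exactly when all nonzero singular values coincide, though the lemma as stated requires only the inequality.
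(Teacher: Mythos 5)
Your proof is correct and follows essentially the same route as the paper: both identify $\mathrm{Tr}(AA^*)$ with $\sum_i \sigma_i^2$, note that only the first $\mathrm{rank}(A)$ singular values are nonzero, and bound each by $\sigma_1^2 = \|A\|^2$. Your version simply spells out the SVD justification that the paper leaves implicit.
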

\begin{proof} \begin{align*} \|A\|^2 \mathrm{rank}(A) & = \sigma_1^2 \mathrm{rank}(A) \\
& \geq \sum_{i=1}^{\mathrm{rank}(A)} \sigma_i^2 \\
& = \sum_{i=1}^n \sigma_i^2 \\
& = \mathrm{Tr}(AA^*).\end{align*}
\end{proof}

\begin{lem}\label{normcp} Let $\Phi:M_n(\mathbb{C})\rightarrow M_m(\mathbb{C})$ be a completely positive map and $T_{\Phi} \in \mathcal{L}(\C^{n\times n},\C^{m\times m})$ be its representing matrix. Then 
\begin{equation} \mathrm{Tr}(T_{\Phi}T_{\Phi}^*) = \|\Phi^C(I)\|_{HS}^2\end{equation}
where the norm on the right side is the Hilbert Schmidt norm.
\end{lem}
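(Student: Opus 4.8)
The plan is to compute each side explicitly in terms of a fixed choice of Kraus operators $\{K_i\}_{i=1}^p$ for $\Phi$ and show that both reduce to the common sum $\sum_{i,j=1}^p |\mathrm{Tr}(K_i^*K_j)|^2$.

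First I would expand the left-hand side using the representation $T_{\Phi} = \sum_{i=1}^p \overline{K}_i \otimes K_i$ from Equation \ref{rep}. Taking the adjoint, and using that $(\overline{K}_j)^* = K_j^T$, gives $T_{\Phi}^* = \sum_j K_j^T \otimes K_j^*$, so that
$$T_{\Phi} T_{\Phi}^* = \sum_{i,j=1}^p \overline{K}_i K_j^T \otimes K_i K_j^*.$$
Then, invoking the multiplicativity of the trace across tensor factors, $\mathrm{Tr}(A\otimes B) = \mathrm{Tr}(A)\,\mathrm{Tr}(B)$, I obtain
$$\mathrm{Tr}(T_{\Phi} T_{\Phi}^*) = \sum_{i,j=1}^p \mathrm{Tr}(\overline{K}_i K_j^T)\,\mathrm{Tr}(K_i K_j^*).$$

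The next step is the small amount of index bookkeeping that I expect to be the only real obstacle: identifying each trace factor with a Hilbert–Schmidt pairing of Kraus operators. Writing the factors out in entries, one checks that $\mathrm{Tr}(\overline{K}_i K_j^T) = \sum_{k,l} \overline{(K_i)_{kl}} (K_j)_{kl} = \mathrm{Tr}(K_i^* K_j)$, while $\mathrm{Tr}(K_i K_j^*) = \mathrm{Tr}(K_j^*K_i) = \overline{\mathrm{Tr}(K_i^* K_j)}$. Hence each summand is $|\mathrm{Tr}(K_i^* K_j)|^2$, and therefore
$$\mathrm{Tr}(T_{\Phi} T_{\Phi}^*) = \sum_{i,j=1}^p |\mathrm{Tr}(K_i^* K_j)|^2.$$

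Finally I would treat the right-hand side directly from the defining formula \ref{complement} for the complementary channel. Evaluating at the identity yields $\Phi^C(I) = \sum_{i,j=1}^p \mathrm{Tr}(K_j^* K_i)\,E_{ij}$, so $\Phi^C(I)$ is precisely the Gram matrix of the Kraus operators in the Hilbert–Schmidt inner product, with $(i,j)$ entry $\mathrm{Tr}(K_j^* K_i)$. Its squared Hilbert–Schmidt norm is the sum of the squared moduli of its entries, $\|\Phi^C(I)\|_{HS}^2 = \sum_{i,j=1}^p |\mathrm{Tr}(K_j^* K_i)|^2$, which coincides with the expression for $\mathrm{Tr}(T_{\Phi} T_{\Phi}^*)$ after relabeling the summation indices. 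This establishes the claimed identity; note that although the right-hand side is computed from a particular Kraus representation, the left-hand side is manifestly independent of that choice, so the equality is consistent for any admissible complement $\Phi^C$.
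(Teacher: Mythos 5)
Your proposal is correct and follows essentially the same route as the paper: expand $T_{\Phi}T_{\Phi}^*$ from the Kronecker representation, use multiplicativity of the trace over tensor factors to reduce the left side to $\sum_{i,j}|\mathrm{Tr}(K_i^*K_j)|^2$, and match this against the entrywise Hilbert--Schmidt norm of $\Phi^C(I)$ computed from Equation \ref{complement}. The only difference is that you spell out the identification $\mathrm{Tr}(\overline{K}_iK_j^T)=\mathrm{Tr}(K_i^*K_j)$ explicitly, which the paper leaves implicit.
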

\begin{proof} If $\Phi$ has Kraus operators $K_i$, then by Equation \ref{rep} we have that 
$$T_{\Phi}= \sum_{i=1}^p \overline{K}_i \otimes K_i$$ and so
$$T_{\Phi}T_{\Phi}^* = \sum_{i=1}^p\sum_{j=1}^p \overline{K}_iK_j^T \otimes K_i K_j^*.$$

Taking the trace yields 
\begin{align} \mathrm{Tr}(T_{\Phi}T_{\Phi}^*) & = \sum_{i=1}^p\sum_{j=1}^p \mathrm{Tr}(\overline{K}_iK_j^T \otimes K_i K_j^*) \\
& = \sum_{i=1}^p\sum_{j=1}^p \mathrm{Tr}(\overline{K}_iK_j^T)\mathrm{Tr}(K_i K_j^*) \\
&\label{sumtrace} = \sum_{i=1}^p\sum_{j=1}^p \bigl| \mathrm{Tr}(K_iK_j^*) \bigr|^2.\end{align}

Recalling Equation \ref{complement}, 
\begin{equation} \Phi^C(I) = \sum_{i=1}^p\sum_{j=1}^p \mathrm{Tr}(K_i^*K_j)E_{ji}\end{equation} 
and so \begin{equation}\label{compofid}\|\Phi^C(I)\|_{HS}^2 = \sum_{i=1}^p\sum_{j=1}^p \bigl|\mathrm{Tr}(K_i^*K_j)\bigr|^2.\end{equation}

Comparing Equation \ref{sumtrace} to Equation \ref{compofid} yields the desired equality.
\end{proof}

We are now in a position to prove our main theorem:
\begin{thm}\label{main} Let $\Phi:M_n(\mathbb{C})\rightarrow M_m(\mathbb{C})$ with Kraus operators $\{K_i\}_{i=1}^p$ and complementary channel $\Phi^C$. Then
\begin{equation} \mathrm{rank}(\Phi)\mathrm{rank}(\Phi^C)\geq \frac{\|\Phi(I)\|_{HS}^2\|\Phi^C(I)\|_{HS}^2}{\|\Phi\|^2\|\Phi^C\|^2}.\end{equation}
\end{thm}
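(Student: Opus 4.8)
The plan is to apply Lemma \ref{singvals} to the representing matrices $T_\Phi$ and $T_{\Phi^C}$ separately, and then combine the two resulting inequalities by multiplying them together. Throughout I read $\mathrm{rank}(\Phi)$ and $\|\Phi\|$ as the rank and the operator norm (largest singular value) of the representing matrix $T_\Phi$, so that the lemmas of the previous section apply directly.

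First I would apply Lemma \ref{singvals} with $A = T_\Phi$, obtaining $\|\Phi\|^2\,\mathrm{rank}(\Phi) \geq \mathrm{Tr}(T_\Phi T_\Phi^*)$. By Lemma \ref{normcp} the right-hand side is exactly $\|\Phi^C(I)\|_{HS}^2$, which gives the first inequality
$$\|\Phi\|^2\,\mathrm{rank}(\Phi) \geq \|\Phi^C(I)\|_{HS}^2.$$
Next I would run the identical argument on the channel $\Phi^C$: Lemma \ref{singvals} applied to $T_{\Phi^C}$ gives $\|\Phi^C\|^2\,\mathrm{rank}(\Phi^C) \geq \mathrm{Tr}(T_{\Phi^C}T_{\Phi^C}^*)$, and Lemma \ref{normcp} (now for $\Phi^C$) rewrites the right-hand side as $\|(\Phi^C)^C(I)\|_{HS}^2$. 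To match the numerator of the target I then need $\|(\Phi^C)^C(I)\|_{HS} = \|\Phi(I)\|_{HS}$; in fact I expect the stronger identity $(\Phi^C)^C(I) = \Phi(I)$.

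This last identity is where the real work lies, and I expect it to be the main obstacle. Using Proposition \ref{opsofcomp}, the Kraus operators $L_a$ of $\Phi^C$ are assembled from the rows of the $K_i$, so a direct computation should give $\mathrm{Tr}(L_a L_b^*) = \sum_i (K_i K_i^*)_{ab} = \Phi(I)_{ab}$, showing that the $(a,b)$ entry of $(\Phi^C)^C(I)$ coincides with that of $\Phi(I)$. Equivalently one can evaluate $\mathrm{Tr}(T_{\Phi^C}T_{\Phi^C}^*) = \sum_{a,b}\bigl|\mathrm{Tr}(L_a L_b^*)\bigr|^2 = \|\Phi(I)\|_{HS}^2$ directly, sidestepping the double complement altogether. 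The difficulty is purely bookkeeping: carefully tracking how the complement construction passes from a channel to the rows of its Kraus operators and back, while keeping the dimensions $m$, $n$, $p$ straight.

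Finally, with the two inequalities
$$\|\Phi\|^2\,\mathrm{rank}(\Phi) \geq \|\Phi^C(I)\|_{HS}^2, \qquad \|\Phi^C\|^2\,\mathrm{rank}(\Phi^C) \geq \|\Phi(I)\|_{HS}^2$$
in hand, I would multiply them and divide through by $\|\Phi\|^2\|\Phi^C\|^2$ to conclude
$$\mathrm{rank}(\Phi)\,\mathrm{rank}(\Phi^C) \geq \frac{\|\Phi(I)\|_{HS}^2\,\|\Phi^C(I)\|_{HS}^2}{\|\Phi\|^2\,\|\Phi^C\|^2},$$
which is the desired statement. Every step apart from the norm identity of the preceding paragraph is a routine substitution into Lemmas \ref{singvals} and \ref{normcp}.
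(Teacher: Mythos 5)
Your proposal is correct and follows essentially the same route as the paper: apply Lemma \ref{singvals} and Lemma \ref{normcp} to $T_\Phi$ and to $T_{\Phi^C}$, then multiply the two inequalities. The only difference is that you explicitly verify the identity $\mathrm{Tr}(T_{\Phi^C}T_{\Phi^C}^*)=\|\Phi(I)\|_{HS}^2$ (equivalently $(\Phi^C)^C(I)=\Phi(I)$), a step the paper's proof leaves implicit under ``doing the same for $\Phi^C$''; your computation $\mathrm{Tr}(L_aL_b^*)=\sum_i (K_iK_i^*)_{ab}$ is the right way to check it.
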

\begin{proof} This is an easy consequence of Lemmas \ref{singvals} and \ref{normcp} applied to the matrix $T_{\Phi}$. By Lemma \ref{singvals}, 
$$\mathrm{rank}(\Phi) \geq \frac{\mathrm{Tr}(T_{\Phi}T_{\Phi}^*)}{\|\Phi\|^2}$$ and using Lemma \ref{normcp} to substitute in for $\mathrm{Tr}(T_{\Phi}T_{\Phi}^*)$ gives 
\begin{equation}\label{1}\mathrm{rank}(\Phi)\geq \frac{\|\Phi^C(I)\|_{HS}^2}{\|\Phi\|^2}.\end{equation}
Doing the same for $\Phi^C$ gives 
\begin{equation}\label{2} \mathrm{rank}(\Phi^C) \geq \frac{\|\Phi(I)\|_{HS}^2}{\|\Phi^C\|^2}\end{equation} and multiplying Inequalities \ref{1} and \ref{2} together gives the result.
\end{proof} 

\section{Unital Channels and Examples} 
An important class of completely positive maps are the unital maps, those satisfying
\begin{equation} \Phi(I) = \sum_{i=1}^p K_iK_i^* = I.\end{equation}

If a completely positive map $\Phi:M_n(\C)\rightarrow M_m(\C)$ is both trace-preserving and unital, then $\Phi$ is necessarily a contraction. The proof, which we present for the sake of completeness, can be found in \cite{paulsen}.
\begin{proof} Let $\Phi$ be unital, completely positive, and trace-preserving. Let 
$$A = \begin{bmatrix} I & 0 \\ a & 0 \end{bmatrix}\begin{bmatrix} I & a^* \\ 0 & 0 \end{bmatrix} = \begin{bmatrix} I & a^* \\ a & aa^*\end{bmatrix},$$ for $a\in M_n(\mathbb{C})$. 
Clearly $A$ is positive, and since $\Phi$ is completely positive, 
$$(\mathrm{Id_2}\otimes \Phi)(A) = \begin{bmatrix} \Phi(I) & \Phi(a^*) \\ \Phi(a) & \Phi(aa^*)\end{bmatrix} = \begin{bmatrix} I & \Phi(a)^* \\ \Phi(a) & \Phi(aa^*)\end{bmatrix}$$ is positive as well. A Schur complement argument shows that \begin{equation}\label{contract}\Phi(aa^*) \geq \Phi(a)\Phi(a)^*.\end{equation}
Taking traces of Equation \ref{contract} and using the fact that $\Phi$ is trace-preserving 
\begin{align} \|a\|_{HS}^2 & = \mathrm{Tr}(aa^*) \\
& = \mathrm{Tr}(\Phi(aa^*)) \\
& \geq  \mathrm{Tr}(\Phi(a)\Phi(a)^*) \\
& = \|\Phi(a)\|_{HS}^2.\end{align}
\end{proof}

\begin{lem}\label{unitallem} If $\Phi:M_n(\C)\rightarrow M_m(\C)$ is a unital CP map, then $\Phi^C$ satisfies 
\begin{equation} \|\Phi^C\|_{HS}^2 \leq m.\end{equation}
\end{lem}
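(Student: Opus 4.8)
The plan is to read $\|\Phi^C\|_{HS}^2$ as $\mathrm{Tr}(T_{\Phi^C}T_{\Phi^C}^*)$, the squared Hilbert--Schmidt norm of the representing matrix of the complementary channel, and to reduce the claim to Lemma \ref{normcp}. Since $\Phi^C$ is itself a completely positive map, that lemma applies to it verbatim and gives
\begin{equation} \|\Phi^C\|_{HS}^2 = \mathrm{Tr}(T_{\Phi^C}T_{\Phi^C}^*) = \|(\Phi^C)^C(I)\|_{HS}^2, \end{equation}
so the entire problem is shifted onto understanding the double complement $(\Phi^C)^C$ evaluated at the identity.

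The key step is to show that $(\Phi^C)^C(I) = \Phi(I)$. First I would write down Kraus operators for $\Phi^C$ using Proposition \ref{opsofcomp}, namely $L_a = \sum_{j} e_j K_{ja}$, where $K_{ja}$ is the $a$-th row of $K_j$. Applying the complement formula \ref{complement} to $\Phi^C$ yields $(\Phi^C)^C(I) = \sum_{a,b}\mathrm{Tr}(L_b^* L_a) E_{ab}$, so it suffices to compute the Gram matrix $\mathrm{Tr}(L_a L_b^*)$. A short entrywise calculation should give $\mathrm{Tr}(L_a L_b^*) = \sum_{j}(K_j K_j^*)_{ab} = \bigl(\sum_j K_j K_j^*\bigr)_{ab} = (\Phi(I))_{ab}$, whence $(\Phi^C)^C(I) = \Phi(I)$.

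With this identity in hand the conclusion is immediate: because $\Phi$ is unital, $\Phi(I) = I_m$, and therefore
\begin{equation} \|\Phi^C\|_{HS}^2 = \|\Phi(I)\|_{HS}^2 = \|I_m\|_{HS}^2 = \mathrm{Tr}(I_m) = m. \end{equation}
In fact this argument produces equality; the stated bound $\|\Phi^C\|_{HS}^2 \leq m$ is the weaker form presumably recorded for later use. I expect the principal obstacle to be bookkeeping rather than anything conceptual: one must track the dimensions and indices of the complement's Kraus operators carefully, and in particular verify that the Gram matrix reproduces $\Phi(I) = \sum_j K_j K_j^*$ rather than $\Phi^{\dagger}(I) = \sum_j K_j^* K_j$, since it is the latter that governs trace preservation and the former that governs the unitality hypothesis actually being used here.
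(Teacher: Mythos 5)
Your argument is correct, and it reaches the bound by a genuinely different route from the paper. The paper's proof is a direct computation: it applies Cauchy--Schwarz to $|\mathrm{Tr}(K_i^*K_jX)|^2$ to bound $\|\Phi^C\|_{HS}^2$ by $\sum_{i,j}\|K_i^*K_j\|_{HS}^2 = \mathrm{Tr}\bigl(\sum_{i,j}K_iK_i^*K_jK_j^*\bigr)$ and then collapses this to $\mathrm{Tr}(I)=m$ using unitality. You instead apply Lemma \ref{normcp} to the channel $\Phi^C$ itself and prove the bicomplement identity $(\Phi^C)^C(I)=\Phi(I)$; your index computation $\mathrm{Tr}(L_aL_b^*)=\sum_j(K_jK_j^*)_{ab}$ checks out, with $a,b$ ranging over $1,\dots,m$, the rows of the $K_j$, so that both sides are $m\times m$ matrices (note that Proposition \ref{opsofcomp} as printed indexes the $L_i$ by $1\le i\le n$, which is a typo when $m\neq n$). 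The two computations agree because $\sum_{i,j}\|K_i^*K_j\|_{HS}^2=\mathrm{Tr}\bigl(\Phi(I)^2\bigr)=\|\Phi(I)\|_{HS}^2=\mathrm{Tr}(T_{\Phi^C}T_{\Phi^C}^*)$, but your route makes the structural reason visible --- the lemma is just Lemma \ref{normcp} read backwards, with unitality of $\Phi$ entering through $\Phi(I)=I_m$ --- and it yields the sharper statement $\mathrm{Tr}(T_{\Phi^C}T_{\Phi^C}^*)=m$ with equality. The only point to flag is notational: the paper's proof implicitly treats $\|\Phi^C\|_{HS}$ as the operator norm induced by the Hilbert--Schmidt norm (that is what its Cauchy--Schwarz step bounds), whereas you read it as the Frobenius norm of $T_{\Phi^C}$; since the induced norm is dominated by the Frobenius norm, your equality implies the stated inequality under either reading, and it is strong enough for the use made of the lemma in the subsequent Corollary.
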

\begin{proof} This is a straightforward calculation:
\begin{align} \|\Phi^C(X)\|_{HS}^2 & = \sum_{i,j=1}^p |\mathrm{Tr}(K_i^*K_jX)|^2 \\
& \leq \sum_{i,j=1}^p \|K_i^*K_j\|_{HS}^2\|X\|_{HS}^2\end{align}
and so 
\begin{align} \|\Phi^C\|_{HS}^2 & \leq \sum_{i,j=1}^n \mathrm{Tr}(K_i^*K_jK_j^*K_i) \\
& = \mathrm{Tr}(\sum_{i,j=1}^p K_iK_i^*K_jK_j^*)\\
& = \mathrm{Tr}(I)\\
& = m.\end{align}
\end{proof}

\begin{lem}\label{tplem} If $\Phi:M_n(\C)\rightarrow M_m(\C)$ is trace-preserving and CP, then $\Phi^C$ satisfies
\begin{equation} \|\Phi^C(I)\|_{HS}^2 \geq n^2. \end{equation}
\end{lem}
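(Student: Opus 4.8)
The plan is to reduce the claim to an elementary statement about the spectrum of the single positive matrix $\Phi^C(I)$, using trace-preservation to pin down its trace. Setting $X = I_n$ in Equation \ref{complement} gives
\begin{equation*} \Phi^C(I) = \sum_{i,j=1}^p \mathrm{Tr}(K_j^* K_i)\, E_{ij}, \end{equation*}
and since $\Phi^C$ is completely positive and $I_n \geq 0$, the matrix $P := \Phi^C(I) \in M_p(\C)$ is positive semidefinite. Writing its eigenvalues as $\mu_1, \ldots, \mu_p \geq 0$, Equation \ref{compofid} records that the quantity to be estimated is
\begin{equation*} \|\Phi^C(I)\|_{HS}^2 = \mathrm{Tr}(P^2) = \sum_{k=1}^p \mu_k^2. \end{equation*}

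Next I would compute $\mathrm{Tr}(P)$. Because $\Phi$ is trace-preserving, so is $\Phi^C$, as verified just after Equation \ref{complement}; hence
\begin{equation*} \mathrm{Tr}(P) = \mathrm{Tr}\bigl(\Phi^C(I)\bigr) = \mathrm{Tr}(I_n) = n, \end{equation*}
or equivalently $\sum_k \mu_k = \sum_{i=1}^p \mathrm{Tr}(K_i^* K_i) = \mathrm{Tr}\bigl(\sum_{i=1}^p K_i^* K_i\bigr) = n$, using the trace-preservation identity $\sum_i K_i^* K_i = I_n$. At this point the whole statement has become a question purely about the eigenvalue vector $(\mu_1, \ldots, \mu_p)$: given that these numbers are nonnegative and sum to $n$, bound $\sum_k \mu_k^2$ below by $n^2$.

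The decisive step, and the one I expect to be the main obstacle, is precisely this passage from the first moment to the second moment of the spectrum of $\Phi^C(I)$, that is, relating the $\ell^1$ and $\ell^2$ norms of $(\mu_k)$. The elementary power-mean comparison controls $\sum_k \mu_k^2$ by $\bigl(\sum_k \mu_k\bigr)^2 = n^2$ from \emph{above}, with equality exactly when the spectrum is concentrated in a single eigenvalue; so securing the claimed lower bound is the delicate direction and must rest on extra structure of the complementary channel rather than on positivity and trace alone. I would therefore look for this structure in a minimal (Choi) Kraus representation, where the orthogonality relations $\mathrm{Tr}(K_i^* K_j) = \lambda_i \delta_{ij}$ diagonalize $P$ with the nonzero eigenvalues $\lambda_i$ of $C_\Phi$ on the diagonal, subject to the constraint $\mathrm{Tr}(C_\Phi) = n$ imposed by trace-preservation, and I would examine whether the extremal isometric case is the mechanism that must be invoked to force the inequality in the stated direction.
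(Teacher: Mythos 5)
You stop short of proving the inequality, so as written this is not a proof; but your diagnosis of \emph{why} you cannot finish is exactly right, and it exposes a genuine error in the paper rather than a gap in your argument. Setting $P=\Phi^C(I)$, you correctly reduce the claim to: $P\geq 0$, $\mathrm{Tr}(P)=n$, show $\mathrm{Tr}(P^2)\geq n^2$. As you observe, positivity and the trace constraint give $\mathrm{Tr}(P^2)=\sum_k\mu_k^2\leq\bigl(\sum_k\mu_k\bigr)^2=n^2$, i.e.\ the \emph{opposite} inequality, with equality only when $P$ has a single nonzero eigenvalue. The paper's own proof obtains the stated bound by asserting $\sum_{i,j=1}^p|\mathrm{Tr}(K_i^*K_j)|^2\geq\bigl|\sum_{i\neq j}\mathrm{Tr}(K_i^*K_j)\bigr|^2+\bigl|\sum_{i=1}^p\mathrm{Tr}(K_i^*K_i)\bigr|^2$, which replaces a sum of squared moduli by the squared modulus of a sum; this is invalid whenever more than one diagonal Gram entry is nonzero (Cauchy--Schwarz only gives $\sum_i|a_i|^2\geq\frac{1}{p}\bigl|\sum_i a_i\bigr|^2$, which would yield the weaker, correct bound $\|\Phi^C(I)\|_{HS}^2\geq n^2/p$).

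No ``extra structure'' of the complement will rescue the claim, because the lemma is false as stated. In the Choi (orthogonal) Kraus representation one has $\mathrm{Tr}(K_i^*K_j)=\lambda_i\delta_{ij}$ with $\lambda_i$ the nonzero eigenvalues of $C_\Phi$, so $\|\Phi^C(I)\|_{HS}^2=\mathrm{Tr}(C_\Phi^2)$ while trace-preservation forces $\mathrm{Tr}(C_\Phi)=n$; hence $\|\Phi^C(I)\|_{HS}^2\leq n^2$ with equality precisely when $C_\Phi$ is rank one, i.e.\ $\Phi(X)=VXV^*$ for an isometry $V$ --- the extremal case you flagged. A concrete counterexample is the completely depolarizing channel $\Phi(X)=\frac{1}{n}\mathrm{Tr}(X)I_n$ with Kraus operators $\{\frac{1}{\sqrt{n}}E_{ij}\}_{i,j=1}^n$: then $\mathrm{Tr}(K_{ij}^*K_{kl})=\frac{1}{n}\delta_{ik}\delta_{jl}$, so $\|\Phi^C(I)\|_{HS}^2=1<n^2$ for $n\geq 2$. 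So the honest conclusion of your approach is that the correct general bounds are $n^2/p\leq\|\Phi^C(I)\|_{HS}^2\leq n^2$, and the downstream corollary in the paper needs a different justification.
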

\begin{proof} Again, this is straightforward:
\begin{align} \|\Phi^C(I)\|_{HS}^2 &= \sum_{i,j=1}^p \bigl|\mathrm{Tr}(K_i^*K_j)\bigr|^2 \\
& \geq \bigl|\sum_{i\neq j} \mathrm{Tr}(K_i^*K_j) \bigr|^2 + \bigl|\sum_{i=1}^p \mathrm{Tr}(K_i^*K_i) \bigr|^2 \\
& \geq \bigl| \mathrm{Tr}(I)\bigr|^2 \\
& = n^2.\end{align}
\end{proof}

\begin{cor} If $\Phi:M_n(\mathbb{C})\rightarrow M_n(\mathbb{C})$ is a completely positive map that is both unital and trace-preserving then 
\begin{equation} \mathrm{Rank}(\Phi)\mathrm{Rank}(\Phi^C) \geq n^2.\end{equation}
\end{cor}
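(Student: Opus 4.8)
The plan is to read this off as a direct consequence of the main theorem (Theorem \ref{main}), once the hypotheses of unitality and trace-preservation are converted, via the lemmas of this section, into explicit numerical bounds for the four quantities appearing on its right-hand side. Applying Theorem \ref{main} to $\Phi$ (with $m=n$) gives
$$\mathrm{rank}(\Phi)\,\mathrm{rank}(\Phi^C) \geq \frac{\|\Phi(I)\|_{HS}^2\,\|\Phi^C(I)\|_{HS}^2}{\|\Phi\|^2\,\|\Phi^C\|^2},$$
so it suffices to show the right-hand side is at least $n^2$. I would do this by bounding the numerator below by $n^3$ and the denominator above by $n$.

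For the numerator, unitality gives $\Phi(I)=I_n$, whence $\|\Phi(I)\|_{HS}^2 = \mathrm{Tr}(I_n) = n$; and since $\Phi$ is trace-preserving, Lemma \ref{tplem} yields $\|\Phi^C(I)\|_{HS}^2 \geq n^2$. Multiplying gives a numerator of at least $n^3$.

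For the denominator, I would first invoke the contraction result established above: since $\Phi$ is simultaneously unital and trace-preserving, $\|\Phi(a)\|_{HS}\leq\|a\|_{HS}$ for all $a$, which is exactly the statement that the operator norm of the representing matrix satisfies $\|\Phi\|\leq 1$, so $\|\Phi\|^2\leq 1$. To bound $\|\Phi^C\|$ I would use Lemma \ref{unitallem}, which (with $m=n$) gives $\|\Phi^C\|_{HS}^2\leq n$. The one point requiring care, and the main subtlety of the argument, is that Theorem \ref{main} is phrased in terms of the operator norm $\|\Phi^C\|$ whereas Lemma \ref{unitallem} controls the Hilbert--Schmidt norm; the bridge is that the operator norm (the largest singular value) is always dominated by the Hilbert--Schmidt norm (the root-sum-of-squares of the singular values), so $\|\Phi^C\|^2\leq\|\Phi^C\|_{HS}^2\leq n$. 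Hence the denominator is at most $n$.

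Combining the two bounds, and noting that $\|\Phi\|,\|\Phi^C\|>0$ so that the division is legitimate, the right-hand side is at least $n^3/n = n^2$, which completes the proof. I do not anticipate any genuine obstacle beyond matching up the two notions of norm described above.
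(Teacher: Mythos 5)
Your proposal is correct and follows essentially the same route as the paper, which combines $\|\Phi(I)\|_{HS}^2=n$, the contraction property of unital trace-preserving maps, and Lemmas \ref{unitallem} and \ref{tplem} in exactly this way. Your explicit remark that the operator norm is dominated by the Hilbert--Schmidt norm is a detail the paper leaves implicit, but it is the intended bridge and does not change the argument.
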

\begin{proof} The proof is simply a combination of the fact that $\|\Phi(I)\|_{HS}^2 = n$ if $\Phi$ is unital, the fact that a unital and trace-preserving map is a contraction, and Lemmas \ref{unitallem} and \ref{tplem}. 
\end{proof}
We now present some interesting examples of channels satisfying this.
\subsection{Unitary Adjunction Channels}
Let $\Phi:M_n(\mathbb{C})\rightarrow M_n(\mathbb{C})$ be a unitary adjuntion channel:
$$\Phi(X) = UXU^*$$ for some unitary matrix $U$. Clearly $\Phi$ is invertible, so $\mathrm{rank}(\Phi) = n^2$. Further, $\Phi^C(X) = \mathrm{Tr}(X)$, and obviously has $\mathrm{rank}(\Phi^C)=1$. In this case, 
\begin{equation} \mathrm{rank}(\Phi)\mathrm{rank}(\Phi^C) = n^2.\end{equation}
This is a simple example of the case where equality is achieved.
\subsection{Rank One Channels}
Let $\Phi:M_n(\C)\rightarrow M_m(\C)$ be trace-preserving and rank-one: $\Phi(X)=\mathrm{Tr}(X)\rho_0$ for some fixed $\rho_0$ with trace $1$. The Choi matrix for such a channel is 
$$C_{\Phi} = I\otimes \rho_0$$ and is positive if and only if $\rho_0\geq 0$. Let $\{v_i\}_{i=1}^{\mathrm{rank}(\rho_0)}$ be a set of orthonormal eigenvectors corresponding to non-zero eigenvalues of $\rho_0$. Then a set of eigenvectors of $C_{\Phi}$ is $\{e_i\otimes v_j\}_{i=1,j=1}^{n,\ \ \ \mathrm{rank}(\rho_0)}$. \\
A set of Kraus operators for this channel is then 
\begin{equation}\{K_{ij}\}_{i=1,j=1}^{n,\ \ \ \mathrm{rank}(\rho_0)}=\{\sqrt{\lambda_j}v_je_i^*\}_{i=1,j=1}^{n,\ \ \ \mathrm{rank}(\rho_0)}.\end{equation}
The rank of $\Phi^C$ is equal to the rank of $\Phi^{C\dagger}$ which is in turn the dimension of the operator system $$S_{\Phi} =\mathrm{span}\{K_{ij}^*K_{kl}\}.$$
Fix $j$, and consider 
\begin{equation} K_{ij}^*K_{kj} = \sqrt{\lambda_j}^2 (v_je_i^*)^*(v_je_k^*)  = \lambda_j E_{ik}.\end{equation}
Varying $1\leq i,k\leq n$ we find that $S_{\Phi} = M_n(\C)$ and so $\mathrm{rank}(\Phi^C) = n^2$. Hence
\begin{equation} \mathrm{rank}(\Phi)\mathrm{rank}(\Phi^C) = n^2.\end{equation}
Again, in this case, equality is achieved.
\subsection{Schur Product Channels}
Let $\Phi:M_n(\C)\rightarrow M_n(\C)$ be a Schur product channel:
$$\Phi(X) = C\circ X$$ where $\circ$ denotes the entry-wise Schur product of two matrices, and $C$ is a positive semidefinite matrix. $\Phi$ is trace-preserving if and only if $C$ has all $1$s down its diagonal, in which case $\Phi$ is also automatically unital. A positive semidefinite matrix with $1$s down the diagonal is called a correlation matrix.\\
Denote an orthonormal set of eigenvectors for $C$ by $\hat{v}_i$, so that 
$$ C = \sum_{i=1}^{\mathrm{rank}(C)} \lambda_i \hat{v}_i\hat{v}_i^* = \sum_{i=1}^{\mathrm{rank}(C)} v_iv_i^*,$$
where $v_i = \sqrt{\lambda_i}\hat{v}_i$.\\
Then it is well-known that the Kraus operators for $\Phi$ can be chosen to be diagonal matrices, $$K_i = \mathrm{diag}(v_i).$$

If $V$ is the matrix whose $i^{th}$ column is $v_i$, let $w_i\in\mathbb{C}^{m}$ be the vector whose adjoint is the $i^{th}$ row of $V$, for $1\leq i\leq n$. Then $VV^*=C$ and $C_{ij}=w_i^*w_j$.\\

By Proposition \ref{opsofcomp} the Kraus operators for $\Phi^C$ are $L_i = \overline{w}_ie_i^*$, and so the Kraus operators for $\Phi^{C\dagger}$ are $L_i^*=e_iw_i^T$. \\

The matrix representing the channel $\Phi^{C\dagger}$ is then 
\begin{equation} T_{\Phi^{C\dagger}} = \sum_{i=1}^n e_iw_i^*\otimes e_iw_i^T = \sum_{i=1}^n (e_i\otimes e_i)(w_i\otimes \overline{w}_i)^*.\end{equation}
Ignoring the rows with only zero entries, the rank of $T_{\Phi^{C\dagger}}$, and hence of $\Phi^{C\dagger}$, is equal to the rank of $T = \sum_{i=1}^n e_i(w_i\otimes \overline{w}_i)^*$.\\
However, $$TT^* = \sum_{i,j=1}^n e_ie_j^* (w_i^*w_j)(w_i^T\overline{w}_j) = \sum_{i,j=1}^n E_{ij} C_{ij}\overline{C}_{ij}.$$

Hence $TT^* = C\circ\overline{C}$, and so 
\begin{equation}\mathrm{rank}(T)=\mathrm{rank}(TT^*)=\mathrm{rank}(C\circ\overline{C}).\end{equation}

Observe that the rank of $\Phi$ is equal to the number of non-zero entries in $C$. This is of course the same as the number of non-zero entries of $C\circ\overline{C}$. Denote this number by $N(C\circ\overline{C})$.\\
$C\circ \overline{C}$ is a positive semidefinite matrix all of whose entries are nonnegative and real, so $$\mathrm{Tr}((C\circ\overline{C})^2) = \|C\circ\overline{C}\|_{HS}^2 = \sum_{i,j=1}^n a_{ij}^2 \leq N(C\circ\overline{C}).$$

\begin{lem}\label{ranktraces} Let $A$ be a positive semidefinite matrix.

$$\mathrm{Tr}(A)^2 \leq \mathrm{rank}(A)\mathrm{Tr}(A^2).$$
\end{lem}
See \cite{deaett}\cite{pudlak} for other uses of this lemma in relating the rank to the structure of a positive semidefinite matrix. For convenience, we provide a simple proof, from \cite{deaett}.
\begin{proof} Suppose $A$ has rank $k$, with non-zero eigenvalues $\{\lambda_i\}_{i=1}^k$. \\
Let $\Lambda = (\lambda_1,\lambda_2,\ldots,\lambda_k)$ and let $K = \frac{1}{\sqrt{k}}(1,1,\ldots,1)$. By the Cauchy-Schwarz inequality
\begin{align} |\langle \Lambda,K\rangle|^2 &= \frac{1}{k}(\mathrm{Tr}(A))^2\\
& \leq \|\lambda\|^2\|K\|^2\\
& = \mathrm{Tr}(A^2).\end{align}  
\end{proof}

Since $C$ is a correlation matrix, $\mathrm{Tr}(C\circ \overline{C})=n$, so invoking Lemma \ref{ranktraces} we have 
\begin{align} n^2 &= \mathrm{Tr}(C\circ \overline{C})^2\\
&\leq \mathrm{rank}(C\circ\overline{C})\mathrm{Tr}((C\circ\overline{C})^2) \\
& \leq \mathrm{rank}(C\circ\overline{C})N(C\circ\overline{C}).\end{align}

Finally, we appeal to the fact that $N(C\circ\overline{C})=\mathrm{rank}(\Phi)$ and $$\mathrm{rank}(C\circ \overline{C})=\mathrm{rank}(\Phi^{C\dagger})=\mathrm{rank}(\Phi^C)$$ to show that
\begin{equation} \mathrm{rank}(\Phi)\mathrm{rank}(\Phi^C)\geq n^2.\end{equation} 

Although equality is not always achieved in the Schur product case, we can characterize the cases for which it is the case. This will occur whenever all eigenvalues of $C\circ\overline{C}$ are either $0$ or $\frac{n}{\mathrm{rank}(C)}$, and all entries of $C\circ\overline{C}$ (and hence $C$ itself) are modulus $0$ or $1$.\\
The latter occurs whenever $C\circ \overline{C}$ is a direct sum of $\mathrm{Rank}(C\circ\overline{C})$ rank-$1$ correlation matrices, each of which is necessarily of the form $v_iv_i^*$ for some $v_i\in \C^{k_i}$ with all entries of modulus $1$ (see Proposition \ref{onezerosofC} in Section $4$). \\
The eigenvalues of such a matrix are $\{k_i\}$, the dimensions of the direct summands. If they are all to be equal then $k_i=k$ for some fixed $k$ and all $i$, so that $C\circ\overline{C}$ is the direct sum of $\mathrm{Rank}(C\circ\overline{C})=\frac{n}{k}$ rank-$1$ correlation matrices each of dimension $k\times k$. 
\section{Doubly Stochastic Matrices}
It is well-known that unital, trace-preserving completely positive maps are in some ways analogous to doubly stochastic matrices.
Recall that a doubly stochastic (DS) matrix $D$ is a real $n\times n$ matrix such that $d_{ij}\geq 0$ for all $i,j\in \langle n\rangle$ and
\begin{equation}\label{defn} \sum_{i=1}^n d_{ij} = \sum_{j=1}^n d_{ij} = 1 \ \ \ \forall \ i,j \in \langle n\rangle.\end{equation}

Note that Equation \ref{defn} immediately implies that \begin{equation}\label{constraint}\sum_{i=1}^n \sum_{j=1}^n d_{ij} = n.\end{equation} 

There are many interesting connections between doubly stochastic matrices and unital, trace-preserving completely positive maps. In fact, completely positive maps that are both unital and trace-preserving are sometimes called doubly stochastic. If $e$ is the all $1$s vector in $\mathbb{R}^n$, then a matrix $D\in M_n(\mathbb{R})$ with positive entries is doubly stochastic if and only if 
$$De = e \textnormal{ and } e^*D = e^*.$$
Analogously, if $T_{\Phi}$ is the representing matrix for a completely positive map $\Phi$, and $E$ is the vectorization of the identity matrix, $\Phi$ is unital and trace-preserving if and only if 
$$T_{\Phi}E = E \textnormal{ and } E^*T_{\Phi} = E^*.$$

It has also been observed by Chru{\'s}ci{\'n}ski in \cite{chruscinski} that if $\Phi$ is a trace-preserving and unital completely positive map, the matrix $D_{\Phi}$ defined by
\begin{equation} D_{\Phi_{ij}} = \Phi(E_{ii})_{jj}\end{equation} is doubly stochastic. If $\Phi$ has Kraus operators $\{K_i\}_{i=1}^p$, then 
$D_{\Phi_{ij}} = \sum_{k=1}^p e_j^*K_k e_ie_i^* K_k^*e_j = \sum_{k=1}^p |(K_k)_{ij}|^2$ and hence 
\begin{equation} D_{\Phi} = \sum_{i=1}^p K_i\circ \overline{K}_i.\end{equation}

Notice that the $i^{th}$ column of $D_{\Phi}$ is the diagonal of $\Phi(E_{ii})$, and so the columns of $D_{\Phi}$ appear down the diagonal of the Choi matrix, $C_{\Phi}$. \\
Properties of this doubly stochastic matrix may sometimes be related to properties of $\Phi$. 
\begin{thm}\label{rankds} Let $\Phi: M_n(\C) \rightarrow M_m(\C)$ be a unital, trace-preserving, completely positive map, with associated doubly stochastic matrix $D_{\Phi}$. Then
\begin{equation} \mathrm{rank}(\Phi)\geq \mathrm{rank}(D_{\Phi}).\end{equation}
\end{thm}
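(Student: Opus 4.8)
My plan is to exhibit $D_\Phi$ as essentially a \emph{compression} of $\Phi$ to the algebra of diagonal matrices, and then invoke the fact that the rank of a linear map cannot increase under composition. Throughout I read $\mathrm{rank}(\Phi)$ as the rank of the representing matrix $T_\Phi$, consistent with the examples above (e.g. $\mathrm{rank}(\Phi)=n^2$ for a unitary adjunction channel). Note first that the hypotheses force $m=n$: unitality gives $\mathrm{Tr}(\Phi(I_n))=m$ while trace preservation gives $\mathrm{Tr}(\Phi(I_n))=n$, so $D_\Phi$ is a genuine $n\times n$ matrix.

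Let $\mathcal{P}:M_n(\C)\to M_n(\C)$ be the pinching onto the diagonal, $\mathcal{P}(X)=\sum_{k=1}^n E_{kk}XE_{kk}$, which retains the diagonal of $X$ and annihilates every off-diagonal entry. This is a linear map whose representing matrix $T_{\mathcal P}$ is the orthogonal projection onto $\mathrm{span}\{E_{kk}\}_{k=1}^n$. Define $\Psi=\mathcal{P}\circ\Phi\circ\mathcal{P}$. Evaluating on the diagonal basis and using $D_{\Phi_{ij}}=\Phi(E_{ii})_{jj}$ gives $\Psi(E_{ii})=\mathcal{P}(\Phi(E_{ii}))=\sum_{j=1}^n D_{\Phi_{ij}}\,E_{jj}$, while $\Psi$ sends every off-diagonal $E_{ij}$ with $i\neq j$ to $0$. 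Hence, in the ordered basis $\{E_{ij}\}$, the representing matrix $T_\Psi$ is supported on the $n$-dimensional diagonal block and there equals $D_\Phi^{T}$; in particular $\mathrm{rank}(\Psi)=\mathrm{rank}(T_\Psi)=\mathrm{rank}(D_\Phi)$.

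It remains to compare $\mathrm{rank}(\Psi)$ with $\mathrm{rank}(\Phi)$. Since composition of maps corresponds to multiplication of representing matrices, $T_\Psi=T_{\mathcal P}\,T_\Phi\,T_{\mathcal P}$, and rank is submultiplicative under matrix products, so $\mathrm{rank}(\Psi)=\mathrm{rank}(T_{\mathcal P}T_\Phi T_{\mathcal P})\leq \mathrm{rank}(T_\Phi)=\mathrm{rank}(\Phi)$. Combining the two computations yields $\mathrm{rank}(\Phi)\geq\mathrm{rank}(\Psi)=\mathrm{rank}(D_\Phi)$, as desired. (Notice the inequality itself uses nothing beyond linearity of $\Phi$; the unital and trace-preserving hypotheses enter only to guarantee that $D_\Phi$ is a well-defined square doubly stochastic matrix.)

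The step I expect to require the most care is the middle one: pinning down that $T_\Psi$ has rank \emph{exactly} $\mathrm{rank}(D_\Phi)$ rather than merely bounding it. Concretely I must verify that $\Psi$ not only maps into diagonal matrices but also factors through them on the input side, so that its representing matrix has no nonzero entries outside the diagonal-to-diagonal block, and then keep the index and transpose conventions straight so that this block is indeed $D_\Phi$ (up to a harmless transpose, which preserves rank). Everything else—the projection property of $\mathcal{P}$ and the submultiplicativity of rank under products—is routine.
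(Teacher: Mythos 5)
Your proof is correct, and it takes a genuinely different route from the paper's. The paper starts from a singular value decomposition $T_{\Phi}=\sum_{i=1}^{\mathrm{rank}(\Phi)}\sigma_i u_i v_i^*$, passes through the Choi--Jamiolkowski isomorphism to write the Choi matrix as $\sum_i \sigma_i \overline{V}_i\otimes U_i$, extracts the diagonal of each tensor factor, and maps back to exhibit $D_{\Phi}=\sum_{i=1}^{\mathrm{rank}(\Phi)}\sigma_i\hat{v}_i\hat{u}_i^*$ as an explicit sum of $\mathrm{rank}(\Phi)$ rank-one matrices. You instead realize $D_{\Phi}$ (up to transpose) directly as the nonzero block of $T_{\mathcal P}T_{\Phi}T_{\mathcal P}$ for the diagonal pinching $\mathcal P$, and conclude by submultiplicativity of rank under products. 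The underlying fact is the same in both cases --- $D_{\Phi}$ is the image of $T_{\Phi}$ under a linear operation that cannot raise rank --- but your version is more elementary (no SVD, no Choi matrix) and makes transparent that only linearity of $\Phi$ is used, with unitality and trace preservation serving merely to make $D_{\Phi}$ a square doubly stochastic matrix; the paper's version has the side benefit of identifying the columns of $D_{\Phi}$ with the diagonal blocks of $C_{\Phi}$, an observation it records separately and reuses. Your bookkeeping of the transpose and of the vanishing of $\Psi$ on off-diagonal basis elements is the only delicate point, and you handle it correctly.
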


\begin{proof} Let $T_{\Phi}$ be the matrix representation of $\Phi$, with singular value decomposition
$$T_{\Phi} = \sum_{i=1}^{\mathrm{rank}(\Phi)} \sigma_i u_i v_i^*.$$
By the Choi-Jamiolkowski isomorphism, this implies that the Choi matrix $C_{\Phi}$ can be written as 
$$C_{\Phi} = \sum_{i=1}^{\mathrm{rank}(\Phi)} \sigma_i \overline{V}_i\otimes U_i$$ where $U_i$ and $V_i$ are matrices whose representation as vectors in the vector space $\C^{n\times n}\simeq M_n(\C)$ are $u_i$ and $v_i$ respectively. Let $\hat{u}_i$ be the diagonal of $U_i$ and $\hat{v}_i$ be the diagonal of $V_i$. Then the diagonal of $C_{\Phi}$ is equal to $\sum_{i=1}^{\mathrm{rank}(\Phi)} \sigma_i \hat{v}_i\otimes \hat{u}_i$. \\
Making use of the isomorphism $v\otimes w \leftrightarrow vw^*$, we obtain that
\begin{equation} D_{\Phi} = \sum_{i=1}^{\mathrm{rank}(\Phi)} \sigma_i \hat{v}_i\hat{u}_i^*\end{equation}
and so clearly the rank of $D_{\Phi}$ can be no bigger than the rank of $\Phi$. 
\end{proof}

In light of this relationship between unital, trace-preserving completely positive maps and doubly stochastic matrices, and the relation between their ranks, as in Theorem \ref{rankds}, it is interesting to note that we can obtain the following bound on the rank of a doubly stochastic matrix. 
\begin{thm} Let $D$ be an $n\times n$ doubly stochastic matrix, with rank $r$ and number of non-zero entries $N$. Then
\begin{equation} Nr\geq n^2.\end{equation}
\end{thm}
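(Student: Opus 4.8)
The plan is to mirror the argument used above for Schur product channels, combining the singular-value inequality of Lemma \ref{singvals} with a Cauchy--Schwarz estimate, the two being glued together by the elementary fact that a doubly stochastic matrix has operator norm at most $1$. Writing $D = (d_{ij})$, the quantities I want to control are $\mathrm{Tr}(DD^*) = \sum_{i,j} d_{ij}^2$, the number $N$ of nonzero entries, and the rank $r$; the total entry sum $\sum_{i,j} d_{ij} = n$ from Equation \ref{constraint} will supply the link to $n^2$.

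First I would establish that $\|D\| \leq 1$, where $\|\cdot\|$ denotes the operator (spectral) norm. For a unit vector $x$, the row-sum condition $\sum_j d_{ij} = 1$ lets me apply Cauchy--Schwarz row by row to get $\bigl|\sum_j d_{ij} x_j\bigr|^2 \leq \sum_j d_{ij}|x_j|^2$, and then summing over $i$ and using the column-sum condition $\sum_i d_{ij} = 1$ yields $\|Dx\|^2 \leq \|x\|^2$. (Alternatively this is immediate from Birkhoff's theorem, since $D$ is a convex combination of permutation matrices, each of norm $1$.) With $\sigma_1 = \|D\| \leq 1$ in hand, Lemma \ref{singvals} applied to $D$ gives $\mathrm{Tr}(DD^*) \leq \|D\|^2\, \mathrm{rank}(D) \leq r$.

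Next I would run the combinatorial Cauchy--Schwarz step. Regarding the $N$ nonzero entries of $D$ as the coordinates of a vector and pairing with the all-ones vector of length $N$,
\begin{equation*} n^2 = \Bigl(\sum_{i,j} d_{ij}\Bigr)^2 = \Bigl(\sum_{d_{ij}\neq 0} d_{ij}\Bigr)^2 \leq N \sum_{d_{ij}\neq 0} d_{ij}^2 = N\,\mathrm{Tr}(DD^*), \end{equation*}
using Equation \ref{constraint} for the first equality. Chaining this with the previous bound $\mathrm{Tr}(DD^*) \leq r$ gives $n^2 \leq N\,\mathrm{Tr}(DD^*) \leq Nr$, as desired.

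I expect the only genuinely substantive ingredient to be the norm bound $\|D\| \leq 1$; once that is available, the rest is a direct application of Lemma \ref{singvals} and Cauchy--Schwarz, exactly parallel to the Schur product computation. Note in particular that I avoid Lemma \ref{ranktraces}, since $D$ need not be positive semidefinite, and instead lean on the singular-value form of Lemma \ref{singvals}, which requires no such hypothesis.
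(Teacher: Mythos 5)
Your proof is correct and follows essentially the same route as the paper: bound $\mathrm{Tr}(DD^*)\leq r$ via the contraction property of $D$ together with the singular-value count, then bound $\mathrm{Tr}(DD^*)$ below by $n^2/N$ using the constraint $\sum_{i,j}d_{ij}=n$. The only differences are cosmetic improvements on your part: you prove $\|D\|\leq 1$ directly rather than citing Perron--Frobenius, and you obtain the lower bound $n^2/N\leq\sum_{i,j}d_{ij}^2$ by a one-line Cauchy--Schwarz instead of the paper's Lagrange-multiplier minimization, which is cleaner and more rigorous.
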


\begin{proof} By the Perron-Frobenius theorem, $D$ is a contraction, and hence for every singular value $\sigma_i$ of $D$ we have 
\begin{equation} \sigma_i \leq 1.\end{equation} See, for example, \cite{minc}.
Then it is clear that 
\begin{align}\label{rank}\mathrm{Tr}(DD^T) &= \sum_{i=1}^n \sigma_i^2 &\\ &= \sum_{i=1}^r \sigma_i^2 \\
&\leq r.\end{align}

We now seek to minimize $\mathrm{Tr}(DD^T) = \sum_{i=1}^n\sum_{j=1}^n d_{ij}^2$ subject to the constraint   
in Equation \ref{constraint}. \\
If we define the set $Z_D: = \{ (i,j): i,j\in \langle n\rangle, \ d_{ij} = 0 \} \subseteq \langle n\rangle\times \langle n\rangle$, with set theoretic complement $\overline{Z}$ in $\langle n\rangle \times \langle n\rangle$ then $|\overline{Z}_D|=N$, and the expression we seek to optimize becomes
\begin{equation}\sum_{i=1}^n\sum_{j=1}^n d_{ij}^2 = \sum_{(i,j)\in \overline{Z}_D} d_{ij}^2\end{equation} and we minimize this subject to
\begin{equation} \sum_{(i,j)\in \overline{Z}_D} d_{ij} = n.\end{equation}

Using, for example, Lagrange multipliers, it is clear that the minimum occurs when all non-zero $d_{ij}$ are equal: 
\begin{equation} d_{ij} = \frac{n}{N}, \ (i,j)\in \overline{Z}_D.\end{equation}

Then \begin{equation}\label{Ns} \frac{n^2}{N} = \sum_{(i,j)\in \overline{Z}_D} \bigl(\frac{n}{N}\bigr)^2  \leq \sum_{(i,j)\in \overline{Z}_D} d_{ij}^2\end{equation}
and combining Equations \ref{Ns} and \ref{rank} we obtain
\begin{equation} n^2 \leq Nr.\end{equation}
\end{proof}

\section{Quantum Information}
Completely positive maps have been extensively studied in part because of their connection to the theory of quantum information. A quantum channel is a map that preserves quantum states; mathematically, quantum channels are trace-preserving completely positive maps, an idea going back to \cite{kraus}.\\
A quantum channel is said to be private, or to privatize some input set $S$ if there exists a fixed output $\rho_0$ such that 
\begin{equation} \Phi(X) = \frac{\mathrm{Tr}(X)}{\mathrm{Tr}(I)}\rho_0\end{equation}
for all $X\in S$. Usually, $\rho_0$ is chosen to be the identity, and $S$ is a set with the structure of a $*$-algebra, ideally one isomorphic to $M_{2^k}(\C)$ for some $k\in \mathbb{N}$--in this case, $\Phi$ privatizes $k$ qubits of information. The idea of private channels originates with \cite{ambainis} and has been generalized and further studied in \cite{boykin}\cite{devetak}  \cite{jochym}  \cite{kks}and \cite{levick}.  \\
A channel is correctable on a subset $S$ if $\Phi$ is invertible for all inputs $X\in S$. See \cite{kribslaflamme} or \cite{kribsspekkens} for more on correctable channels.\\
In \cite{kks} it was shown that for certain kinds of privacy, there is a trade-off between the amount of information a channel and its complement can privatize: the degree to which $\Phi$ is private is the degree to which $\Phi^C$ is correctable. The relationship between complementarity, privacy, and correctability was studied further in \cite{plosker}. \\

Our result can be regarded as a companion to this result, in that it provides a similar floor to the degree to which a channel and its complement can both be private; albeit with less to say about the structure of the subsets that are privatized. Our result shows that both $\Phi$ and $\Phi^C$ cannot both be ``highly non-invertible"--the lower the rank of $\Phi$, the greater the rank of $\Phi^C$. \\
\begin{thm} Let $\Phi:M_n(\C)\rightarrow M_m(\C)$ be a completely positive map, with adjoint $\Phi^{\dagger}$. Let $S$ be the operator system $\mathrm{range}(\Phi^{\dagger})$. An algebra $\mathcal{A}$ is privatized by $\Phi$ if and only if 
\begin{equation}\label{priv} \mathrm{Tr}(AX) = \frac{\mathrm{Tr}(A)\mathrm{Tr}(X)}{n}\end{equation}
for all $A\in \mathcal{A}$, $X\in S$. 
\end{thm}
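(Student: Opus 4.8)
The plan is to prove both implications by transporting the pairing $\mathrm{Tr}(AX)$ across the adjoint relation defining $\Phi^{\dagger}$ and then invoking non-degeneracy of the trace inner product on $M_m(\C)$. The central observation is that the privatizing output can always be taken to be $\rho_0 = \Phi(I)$: since $I \geq 0$ and $\Phi$ is completely positive, $\Phi(I)$ is positive semidefinite, hence self-adjoint, and when $\mathcal{A}$ is a unital $*$-algebra privatization forces $\Phi(I) = \frac{\mathrm{Tr}(I)}{n}\rho_0 = \rho_0$. Throughout I would use that $\mathcal{A}$ is closed under $*$, so $A \in \mathcal{A} \Rightarrow A^* \in \mathcal{A}$, together with the elementary identity $\mathrm{Tr}(\Phi^{\dagger}(Y)) = \mathrm{Tr}(\Phi(I)^* Y)$, which is just the adjoint relation applied to the pair $(I,Y)$ with $I^* = I$.

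For the forward direction, assume $\Phi(A) = \frac{\mathrm{Tr}(A)}{n}\rho_0$ for every $A \in \mathcal{A}$. Writing an arbitrary $X \in \mathrm{range}(\Phi^{\dagger})$ as $X = \Phi^{\dagger}(Y)$, I would compute $\mathrm{Tr}(AX) = \mathrm{Tr}((A^*)^*\Phi^{\dagger}(Y)) = \mathrm{Tr}(\Phi(A^*)^* Y)$ via the adjoint relation. Since $A^* \in \mathcal{A}$, privatization gives $\Phi(A^*)^* = \frac{\mathrm{Tr}(A)}{n}\rho_0$ after taking adjoints and using $\overline{\mathrm{Tr}(A^*)} = \mathrm{Tr}(A)$ and $\rho_0 = \rho_0^*$. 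This collapses $\mathrm{Tr}(AX)$ to $\frac{\mathrm{Tr}(A)}{n}\mathrm{Tr}(\rho_0 Y)$, and recognizing $\mathrm{Tr}(\rho_0 Y) = \mathrm{Tr}(\Phi(I)^* Y) = \mathrm{Tr}(\Phi^{\dagger}(Y)) = \mathrm{Tr}(X)$ yields precisely Equation \eqref{priv}.

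For the converse, assume the pairing identity and set $\rho_0 := \Phi(I)$. Fixing $A \in \mathcal{A}$, I would show that $\Phi(A)$ and $\frac{\mathrm{Tr}(A)}{n}\Phi(I)$ have the same trace pairing against every $Y \in M_m(\C)$, which by non-degeneracy forces equality. Concretely, $\mathrm{Tr}(\Phi(A)^* Y) = \mathrm{Tr}(A^* \Phi^{\dagger}(Y))$; applying the hypothesis to $A^* \in \mathcal{A}$ and $X = \Phi^{\dagger}(Y) \in S$ turns the right-hand side into $\frac{\mathrm{Tr}(A^*)}{n}\mathrm{Tr}(X) = \frac{\overline{\mathrm{Tr}(A)}}{n}\mathrm{Tr}(\Phi(I)^* Y)$, which equals $\mathrm{Tr}((\frac{\mathrm{Tr}(A)}{n}\Phi(I))^* Y)$. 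Because the same $\rho_0 = \Phi(I)$ works uniformly for every $A \in \mathcal{A}$, the algebra is privatized.

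The skeleton is short, but the genuine care is needed in the bookkeeping: the target pairing $\mathrm{Tr}(AX)$ is bilinear, whereas the adjoint relation and the privatization hypothesis are naturally phrased through the sesquilinear pairing $\mathrm{Tr}(\,\cdot^*\,\cdot\,)$, so every step must correctly track complex conjugation and the $*$-operation. The main obstacle is therefore ensuring the structural hypotheses are deployed exactly where required—$\mathcal{A} = \mathcal{A}^*$ to keep $A^*$ inside $\mathcal{A}$ at each invocation of privatization, and $I \in \mathcal{A}$ to pin down the single state $\rho_0 = \Phi(I)$—and invoking the self-adjointness and positivity of $\Phi(I)$ to discharge the conjugations cleanly.
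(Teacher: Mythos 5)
Your proposal is correct and follows essentially the same route as the paper: both directions are obtained by transporting the pairing through the adjoint relation, identifying $\mathrm{Tr}(X)=\mathrm{Tr}(\Phi^{\dagger}(Y))=\mathrm{Tr}(\Phi(I)^*Y)$, and invoking non-degeneracy of the trace pairing, with the privatizing state pinned down as $\rho_0=\Phi(I)$. The only difference is cosmetic: the paper works with the bilinear identity $\mathrm{Tr}(Y\Phi(A))=\mathrm{Tr}(\Phi^{\dagger}(Y)A)$ (valid for CP maps by cyclicity of the trace applied to the Kraus form), whereas you route everything through the sesquilinear pairing and therefore need $\mathcal{A}=\mathcal{A}^*$ to absorb the conjugations, which is harmless in context.
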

\begin{proof} $S = \Phi^{\dagger}(M_n(\C))$, so assume Equation \ref{priv} holds for some algebra $\mathcal{A}$, then 
\begin{align} \mathrm{Tr}(Y\Phi(A)) & =  \mathrm{Tr}(\Phi^{\dagger}(Y)A) \\
& = \frac{\mathrm{Tr}(\Phi^{\dagger}(Y))\mathrm{Tr}(A)}{n}\\
& = \frac{\mathrm{Tr}(Y\Phi(I))\mathrm{Tr}(A)}{n}\\
& = \mathrm{Tr}(Y\frac{\mathrm{Tr}(A)\Phi(I)}{n})\end{align}
for all $A \in \mathcal{A}$ and $Y\in M_m(\C)$ so $\Phi(A) = \frac{\mathrm{Tr}(A)}{n}\Phi(I)$.
The other direction is proved by following the steps in reverse. 
\end{proof}
When this trace condition holds between two unital $*$-subalgebras of $M_n(\C)$, we say that the two algebras are quasiorthogonal. It is equivalent to the two algebras being completely orthogonal except for their common intersection in the subspace spanned by the identity. Quasiorthogonality underlies many interesting phenomena in quantum mechanics, including for example mutually unbiased bases. See \cite{petz} for more on this connection, and \cite{levick} for more on the relationship between private channels and quasiorthogonality.\\

Associated to any channel is an algebra, the multiplicative domain:
\begin{defn} Let $\Phi^{\dagger}:M_m(\C)\rightarrow M_m(\C)$ be a unital completely positive map. Then the multiplicative domain of $\Phi^{\dagger}$ is the set 
$$MD(\Phi^{\dagger}) = \{ X \in M_m(\C): \Phi^{\dagger}(X)\Phi^{\dagger}(Y) = \Phi^{\dagger}(XY) \ \forall Y \in M_m(\C)\}.$$
\end{defn}
See \cite{choischwarz} for basic facts on the multiplicative domain.

The multiplicative domain of a unital completely positive map is obviously an algebra. 
\begin{defn} Given a unital, trace-preserving completely positive map $\Phi:M_n(\C)\rightarrow M_n(\C)$, the fixed point algebra is the set 
$$\mathrm{Fix}(\Phi) = \{ X: \Phi(X) = X\}.$$
\end{defn}
If $\Phi$ is trace-preserving and unital, $\mathrm{Fix}(\Phi) = \{K_i\}_{i=1}^{p\prime}$, that is, the fixed point set is equal to the set of matrices that commute with each Kraus operator of $\Phi$, a fact proven in \cite{kribswavelets}. 

\begin{thm}\label{MDfix} Let $\Phi:M_n(\C)\rightarrow M_m(\C)$ be a trace-preserving, completely positive map. Then 
$$MD(\Phi)\subseteq \mathrm{Fix}(\Phi^{\dagger}\circ\Phi).$$
\end{thm}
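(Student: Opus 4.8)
The plan is to treat $\Psi := \Phi^{\dagger}\circ\Phi$ as a single self-adjoint operator on the Hilbert--Schmidt space $M_n(\C)$ (equipped with $\langle X,Y\rangle = \mathrm{Tr}(X^{*}Y)$) and to recognize membership in $\mathrm{Fix}(\Psi)$ as the statement that a unit saturates a contraction. First I would record that $\Psi$ is positive semidefinite: for every $X$,
$$\langle \Psi(X),X\rangle = \langle \Phi^{\dagger}\Phi(X),X\rangle = \langle \Phi(X),\Phi(X)\rangle = \|\Phi(X)\|_{HS}^2 \geq 0,$$
which uses only the defining relation of $\Phi^{\dagger}$. Next, since $\Phi$ is a contraction in the Hilbert--Schmidt norm (the contraction result established above), $\|\Phi(X)\|_{HS}\le \|X\|_{HS}$ for all $X$, so $\langle \Psi(X),X\rangle \le \langle X,X\rangle$; equivalently $\mathrm{Id}-\Psi$ is a positive semidefinite operator on $M_n(\C)$.

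The crucial input from the multiplicative domain is an isometry statement for its elements. For $a\in MD(\Phi)$, taking $Y=a^{*}$ in the defining relation $\Phi(aY)=\Phi(a)\Phi(Y)$ and using that completely positive maps preserve adjoints (so $\Phi(a^{*})=\Phi(a)^{*}$) gives $\Phi(aa^{*})=\Phi(a)\Phi(a)^{*}$. Taking the trace of both sides and invoking trace-preservation,
$$\|a\|_{HS}^2=\mathrm{Tr}(aa^{*})=\mathrm{Tr}(\Phi(aa^{*}))=\mathrm{Tr}(\Phi(a)\Phi(a)^{*})=\|\Phi(a)\|_{HS}^2,$$
so that $\langle \Psi(a),a\rangle = \|\Phi(a)\|_{HS}^2 = \|a\|_{HS}^2$. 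Hence $\langle (\mathrm{Id}-\Psi)a,a\rangle = 0$ with $\mathrm{Id}-\Psi\succeq 0$. Writing $\mathrm{Id}-\Psi = B^{*}B$ for $B=(\mathrm{Id}-\Psi)^{1/2}$, this gives $\|Ba\|_{HS}^2=0$, so $Ba=0$ and therefore $(\mathrm{Id}-\Psi)a = B^{*}(Ba)=0$, i.e. $\Phi^{\dagger}\Phi(a)=a$. This places $a\in \mathrm{Fix}(\Phi^{\dagger}\circ\Phi)$ and yields the containment $MD(\Phi)\subseteq \mathrm{Fix}(\Phi^{\dagger}\circ\Phi)$.

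The main obstacle is the contraction step $\mathrm{Id}-\Psi\succeq 0$: a generic completely positive map does not shrink the Hilbert--Schmidt norm, so this is precisely where the full strength of the hypothesis enters, through the contraction result, and it is what upgrades the generic Kadison--Schwarz bound $\langle \Psi(a),a\rangle\le \|a\|_{HS}^2$ into the exact saturation $\langle \Psi(a),a\rangle = \|a\|_{HS}^2$ needed to conclude. By contrast, the multiplicative-domain ingredient is a one-line trace identity and requires no unitality, and the final deduction of $(\mathrm{Id}-\Psi)a=0$ from the vanishing quadratic form is immediate from positivity once the square-root device is invoked. I would therefore organize the write-up so that the positivity of $\Psi$, the contractivity giving $\mathrm{Id}-\Psi\succeq 0$, and the norm-preservation on $MD(\Phi)$ are stated as three separate observations, and then combined in the last two lines.
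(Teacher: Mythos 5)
Your argument is clean but, as written, it proves a weaker theorem than the one stated. The pivot of your proof is the operator inequality $\mathrm{Id}-\Psi\succeq 0$ for $\Psi:=\Phi^{\dagger}\circ\Phi$, which you justify by citing the contraction result. That result requires $\Phi$ to be both unital \emph{and} trace-preserving: the Schur complement there reads $\Phi(aa^{*})\geq \Phi(a)\Phi(I)^{-1}\Phi(a)^{*}$ and only collapses to $\Phi(aa^{*})\geq\Phi(a)\Phi(a)^{*}$ because $\Phi(I)=I$. Theorem \ref{MDfix} assumes only trace-preservation, and a merely trace-preserving completely positive map need not be a Hilbert--Schmidt contraction: for $\Phi(X)=\mathrm{Tr}(X)\rho_0$ with $\rho_0$ a rank-one projection one has $\|\Phi(I)\|_{HS}=n$ while $\|I\|_{HS}=\sqrt{n}$, and indeed $\Psi(X)=\mathrm{Tr}(X)I$ has eigenvalue $n>1$, so $\mathrm{Id}-\Psi$ is not positive semidefinite. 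Since the theorem is later invoked for maps that are only assumed trace-preserving (e.g.\ in Theorem \ref{phiphidagger}), this is a genuine gap rather than a cosmetic one. Your other two ingredients are fine and hold in full generality: $\Psi\succeq 0$ always, and the saturation $\|\Phi(a)\|_{HS}=\|a\|_{HS}$ for $a\in MD(\Phi)$ uses only trace-preservation together with $\Phi(aa^{*})=\Phi(a)\Phi(a)^{*}$. But without $\|\Psi\|\leq 1$, the single scalar identity $\langle\Psi(a),a\rangle=\|a\|_{HS}^2$ does not force $\Psi(a)=a$.

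The repair is to replace the quadratic form by the bilinear pairing against an arbitrary test matrix, which is exactly the paper's route: for $X\in MD(\Phi)$ and any $Y$,
\begin{equation*}
\mathrm{Tr}(XY)=\mathrm{Tr}(\Phi(XY))=\mathrm{Tr}(\Phi(X)\Phi(Y))=\mathrm{Tr}(\Phi^{\dagger}(\Phi(X))Y),
\end{equation*}
and nondegeneracy of the trace pairing gives $X=\Phi^{\dagger}(\Phi(X))$ outright. This three-line computation needs neither unitality nor any contraction estimate, because testing against all $Y$ recovers the full vector equality $\Psi(X)=X$ rather than only its diagonal value. If you wish to keep your structure, you must either add unitality to the hypotheses (weakening the theorem) or prove the contraction statement on the relevant subspace by some other means; the direct duality computation avoids the issue entirely.
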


\begin{proof} Let $X\in MD(\Phi)$, and $Y$ be arbitrary. Then
\begin{align} \mathrm{Tr}(XY) & = \mathrm{Tr}(\Phi(XY)) \\
& = \mathrm{Tr}(\Phi(X)\Phi(Y))\\
& = \mathrm{Tr}(\Phi^{\dagger}(\Phi(X))Y).\end{align}
Hence $X=(\Phi^{\dagger}\circ\Phi)(X)$.
\end{proof}

\begin{thm}\label{unitalmd} Let $\Phi:M_n(\C)\rightarrow M_m(\C)$ be a unital completely positive map with Kraus operators $\{K_i\}_{i=1}^p$, and hence operator system $S_{\Phi} = \mathrm{span}\{K_i^*K_j\}_{i,j=1}^p$. Then $S_{\Phi}^{\prime} \subseteq MD(\Phi)$.
\end{thm}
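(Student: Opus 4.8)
The plan is to verify the multiplicative-domain condition directly from the Kraus representation, with no need for the Kadison--Schwarz inequality. Let $T \in S_\Phi'$, so that $T$ commutes with every element of $S_\Phi = \mathrm{span}\{K_i^*K_j\}$; in particular $T K_i^* K_j = K_i^* K_j T$ for all $i,j$. I will show that $\Phi(T)\Phi(Y) = \Phi(TY)$ for every $Y \in M_n(\C)$, which is exactly the defining condition for $T \in MD(\Phi)$ (note that both products live in $M_m(\C)$, so the condition makes sense even though $n \neq m$).

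First I would expand both sides using $\Phi(X) = \sum_i K_i X K_i^*$: on one side $\Phi(T)\Phi(Y) = \sum_{i,j} K_i T K_i^* K_j Y K_j^*$, and on the other $\Phi(TY) = \sum_k K_k TY K_k^*$. The factor $T K_i^* K_j$ sitting in the middle of the first expression is exactly where the commutation hypothesis enters: replacing it by $K_i^* K_j T$ rewrites each summand as $K_i K_i^* K_j (TY) K_j^*$.

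Then I would collapse the sum over $i$ using unitality. Since $\Phi$ is unital, $\sum_i K_i K_i^* = I_m$, so $\sum_{i,j} K_i K_i^* K_j (TY) K_j^* = \sum_j K_j (TY) K_j^* = \Phi(TY)$, which is the desired identity; hence $T \in MD(\Phi)$.

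I expect no serious obstacle here: the computation is short, and the key structural point is that $S_\Phi$ is spanned by precisely the products $K_i^* K_j$ that appear sandwiched in the expansion of $\Phi(T)\Phi(Y)$, so that sliding $T$ past them is exactly the maneuver available to us. The one thing to watch is that it is unitality, $\sum_i K_i K_i^* = I_m$, rather than the trace-preserving relation $\sum_i K_i^* K_i = I_n$, that makes the sum over $i$ telescope; ensuring the correct Kraus sum reduces to the identity is the only place where an error could plausibly creep in.
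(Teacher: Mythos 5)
Your proof is correct and is essentially identical to the paper's: both expand $\Phi(T)\Phi(Y)$ via the Kraus operators, slide $T$ past the sandwiched factors $K_i^*K_j$ using the commutation hypothesis, and collapse the sum over $i$ using unitality $\sum_i K_iK_i^*=I_m$. Your remark about which Kraus sum equals the identity is exactly the point the paper also flags.
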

\begin{proof} Assume $[A,K_i^*K_j]=0$ for all $1\leq i,j\leq p$. Then 
\begin{align} \Phi(A)\Phi(X) & = \sum_{i,j=1}^p K_iAK_i^*K_jXK_j^* \\
& = \sum_{i,j=1}^p K_iK_i^*K_jAXK_j^* \\
& = \sum_{j=1}^p K_jAXK_j^* \\
& = \Phi(AX)\end{align} 
where we use the unitality of $\Phi$ to go from $(91)$ to $(92)$. 
\end{proof}

 \begin{defn} Given an operator system $S$, define $\mathrm{Alg}(S)$ to be the smallest $*$-subalgebra containing $S$.
\end{defn}
Since a $*$-subalgebra is its own double commutant, $\mathrm{Alg}(S)= S^{\prime\prime}$. 

\begin{thm}\label{MDtpunital} Let $\Phi:M_n(\C)\rightarrow M_m(\C)$ be a trace-preserving, unital completely positive map, with Kraus operators $\{K_i\}_{i=1}^p$ and operator system $S_{\Phi}=\mathrm{span}\{K_i^*K_j\}_{i,j=1}^p$. Then $\mathrm{Alg}(S_{\Phi}) = MD(\Phi)^{\prime}$. 
\end{thm}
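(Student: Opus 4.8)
The plan is to reduce the claimed identity to the single equality $MD(\Phi) = S_{\Phi}'$ and then prove that by two inclusions. Since $S_{\Phi}$ is an operator system, the generated algebra is its double commutant, $\mathrm{Alg}(S_{\Phi}) = S_{\Phi}''$; the commutant is inclusion‑reversing and $(S_{\Phi}')' = S_{\Phi}''$, so once $MD(\Phi) = S_{\Phi}'$ is known I may take commutants of both sides to get $MD(\Phi)' = S_{\Phi}'' = \mathrm{Alg}(S_{\Phi})$, which is exactly the assertion. One inclusion is already in hand: Theorem \ref{unitalmd} gives $S_{\Phi}' \subseteq MD(\Phi)$ using only unitality. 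So the whole content of the theorem is the reverse inclusion $MD(\Phi) \subseteq S_{\Phi}'$, i.e. that every $X$ in the multiplicative domain commutes with every $K_i^*K_j$.

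To prove $MD(\Phi) \subseteq S_{\Phi}'$, I would fix $X \in MD(\Phi)$ and first extract two Kadison--Schwarz equalities. Taking $Y = X^*$ in the definition of $MD(\Phi)$ gives $\Phi(XX^*) = \Phi(X)\Phi(X)^*$ at once. The companion equality $\Phi(X^*X) = \Phi(X)^*\Phi(X)$ is where trace preservation enters, and this is the step I expect to be the crux. The Schur‑complement argument recorded before Lemma \ref{unitallem}, applied to $X^*$, yields the one‑sided inequality $\Phi(X^*X) \geq \Phi(X)^*\Phi(X)$, so $D := \Phi(X^*X) - \Phi(X)^*\Phi(X)$ is positive semidefinite. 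Now trace preservation forces $\Tr(D) = 0$: using $\Tr(\Phi(Z)) = \Tr(Z)$ together with the already‑established equality $\Phi(XX^*) = \Phi(X)\Phi(X)^*$, one computes $\Tr(\Phi(X)^*\Phi(X)) = \Tr(\Phi(XX^*)) = \Tr(XX^*) = \Tr(X^*X) = \Tr(\Phi(X^*X))$. A positive semidefinite matrix of zero trace vanishes, so $D = 0$ and the second equality holds.

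With both equalities in hand I would read off the corresponding intertwining relations from the Stinespring form. Writing $W = \sum_{i=1}^p K_i^* \otimes e_i$, unitality makes $W$ an isometry with $\Phi(X) = W^*(X \otimes I_p)W$. Since $I - WW^*$ is a projection, $\Phi(XX^*) - \Phi(X)\Phi(X)^* = \bigl[(I-WW^*)(X^*\otimes I_p)W\bigr]^*\bigl[(I-WW^*)(X^*\otimes I_p)W\bigr]$, so the first equality is equivalent to $(X^*\otimes I_p)W = W\Phi(X^*)$, which upon comparing tensor components and taking adjoints reads $K_i X = \Phi(X)K_i$ for every $i$. The identical computation applied to $D$ turns the second equality into $X K_i^* = K_i^*\Phi(X)$ for every $i$. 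Combining the two, $X K_i^* K_j = K_i^*\Phi(X)K_j = K_i^* K_j X$, so $X$ commutes with every $K_i^*K_j$ and hence with all of $S_{\Phi}$; that is, $X \in S_{\Phi}'$.

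The main obstacle is exactly the second Kadison--Schwarz equality of the second paragraph. Without trace preservation only the relation $K_iX = \Phi(X)K_i$ is available, which lets one move $X$ past $K_j$ on one side of $K_i^*K_j$ but not past $K_i^*$ on the other, so the commutation $XK_i^*K_j = K_i^*K_jX$ cannot be closed. Trace preservation is precisely what supplies the missing second intertwiner $XK_i^* = K_i^*\Phi(X)$, which explains why Theorem \ref{unitalmd} needed only unitality whereas the equality proved here genuinely requires both hypotheses.
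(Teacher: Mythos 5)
Your proposal is correct, but it takes a genuinely different route from the paper's. The paper closes the sandwich $S_{\Phi}^{\prime} \subseteq MD(\Phi) \subseteq \mathrm{Fix}(\Phi^{\dagger}\circ\Phi)$ supplied by Theorems \ref{unitalmd} and \ref{MDfix} by observing that $\Phi^{\dagger}\circ\Phi$ is itself unital and trace-preserving with Kraus operators $\{K_i^*K_j\}$, and then invoking the cited fact that the fixed-point set of a unital trace-preserving channel is the commutant of its Kraus operators; this identifies $\mathrm{Fix}(\Phi^{\dagger}\circ\Phi)$ with $S_{\Phi}^{\prime}$ and forces all three sets to coincide, after which taking commutants finishes the argument exactly as you do. You instead prove the missing inclusion $MD(\Phi)\subseteq S_{\Phi}^{\prime}$ directly: the one-sided multiplicativity gives $\Phi(XX^*)=\Phi(X)\Phi(X)^*$ for free, trace preservation upgrades the Kadison--Schwarz inequality to the companion equality $\Phi(X^*X)=\Phi(X)^*\Phi(X)$ via the clean trace computation $\Tr(\Phi(X)^*\Phi(X))=\Tr(\Phi(XX^*))=\Tr(XX^*)=\Tr(\Phi(X^*X))$, and the two equalities translate through the isometry $W=\sum_i K_i^*\otimes e_i$ into the intertwiners $K_iX=\Phi(X)K_i$ and $XK_i^*=K_i^*\Phi(X)$, which combine to $XK_i^*K_j=K_i^*K_jX$. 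All of these steps check out (in particular your factorization through the projection $I-WW^*$ is the standard one and is applied correctly). What your approach buys is self-containedness: it does not lean on the externally cited fixed-point-algebra theorem --- in effect you unfold the relevant instance of its proof --- and it isolates precisely where each hypothesis enters, unitality for the Schwarz inequality and the isometry, trace preservation for the second equality. What the paper's approach buys is brevity and reuse of machinery it has already assembled. Both arguments are sound.
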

\begin{proof} Since $\Phi$ is trace-preserving and unital, Theorems \ref{MDfix} and \ref{unitalmd} both apply, and so 
\begin{equation} S_{\Phi}^{\prime} \subseteq MD(\Phi)\subseteq \mathrm{Fix}(\Phi^{\dagger}\circ\Phi).\end{equation}
$\Phi^{\dagger}\circ\Phi$ is both unital and trace-preserving, and its Kraus operators are $\{K_i^*K_j\}_{i=1}^p$. The fixed point algebra of a unital and trace-preserving channel is the commutant of its Kraus operators, and so $\mathrm{Fix}(\Phi^{\dagger}\circ\Phi) = S_{\Phi}^{\prime}=MD(\Phi)$. 
Hence
\begin{equation} \mathrm{Alg}(S_{\Phi}) = S_{\Phi}^{\prime\prime} = MD(\Phi)^{\prime}.\end{equation}
\end{proof}

A sufficient condition for an algebra $\mathcal{A}$ to be privatized by a channel $\Phi$ is for $\mathcal{A}$ to be quasiorthogonal to the algebra generated by $\mathrm{Range}(\Phi^{\dagger})$. 
Hence, if $\Phi:M_n(\C)\rightarrow M_m(\C)$ is unital and trace-preserving, a sufficient condition for an algebra $\mathcal{A}$ to be privatized by $\Phi^C$ is that $\mathcal{A}$ be quasiorthogonal to $\mathrm{Alg}(S_{\Phi})=MD(\Phi)^{\prime}$. \\

\begin{thm}\label{phiphidagger} Let $\Phi:M_n(\C)\rightarrow M_m(\C)$ be a trace-preserving completely positive map. Let $MD(\Phi)$ be its multiplicative domain. Then $\Phi$ is a $*$-homomorphism when restricted to $MD(\Phi)$ that injects into $MD(\Phi^{\dagger})$.
\end{thm}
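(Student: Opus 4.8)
The plan is to verify the three assertions packaged into the statement separately: that $\Phi$ is multiplicative on $MD(\Phi)$, that its restriction there is $*$-preserving (so that the restriction is a genuine $*$-homomorphism), and finally that this homomorphism is injective with image contained in $MD(\Phi^{\dagger})$. The first two are immediate. Multiplicativity is built into the definition of the multiplicative domain: for $X \in MD(\Phi)$ and any $Y$ we have $\Phi(XY)=\Phi(X)\Phi(Y)$, and in particular this holds when $Y$ also lies in $MD(\Phi)$. Since every completely positive map is positive and positive maps are Hermitian-preserving, $\Phi(X^{*})=\Phi(X)^{*}$ for all $X$; combined with the fact that $MD(\Phi)$ is a $*$-algebra (closed under products and adjoints), this shows that $\Phi|_{MD(\Phi)}$ is a $*$-homomorphism onto its image.

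For injectivity I would appeal to Theorem \ref{MDfix}. Because $\Phi$ is trace-preserving, that theorem gives $MD(\Phi)\subseteq \mathrm{Fix}(\Phi^{\dagger}\circ\Phi)$, so $\Phi^{\dagger}(\Phi(X))=X$ for every $X\in MD(\Phi)$. Thus $\Phi^{\dagger}$ is a left inverse for $\Phi$ on $MD(\Phi)$, which forces $\Phi|_{MD(\Phi)}$ to be injective and incidentally identifies the inverse of $\Phi$ on its image with the restriction of $\Phi^{\dagger}$.

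The substantive step is to show $\Phi(X)\in MD(\Phi^{\dagger})$ for each $X\in MD(\Phi)$. Here I would use that $\Phi^{\dagger}$ is unital (the trace-preservation of $\Phi$ gives $\Phi^{\dagger}(I)=I$), so Choi's characterization of the multiplicative domain of a unital completely positive map applies (see \cite{choischwarz}): $W$ lies in $MD(\Phi^{\dagger})$ exactly when $\Phi^{\dagger}(W^{*}W)=\Phi^{\dagger}(W)^{*}\Phi^{\dagger}(W)$ and $\Phi^{\dagger}(WW^{*})=\Phi^{\dagger}(W)\Phi^{\dagger}(W)^{*}$. Taking $W=\Phi(X)$, I would first rewrite $\Phi(X)^{*}\Phi(X)=\Phi(X^{*})\Phi(X)=\Phi(X^{*}X)$, using Hermitian-preservation together with multiplicativity of $\Phi$ on the $*$-algebra $MD(\Phi)$. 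Since $X^{*}X$ again belongs to $MD(\Phi)\subseteq \mathrm{Fix}(\Phi^{\dagger}\circ\Phi)$, applying $\Phi^{\dagger}$ yields $\Phi^{\dagger}(\Phi(X)^{*}\Phi(X))=\Phi^{\dagger}(\Phi(X^{*}X))=X^{*}X$, whereas $\Phi^{\dagger}(\Phi(X))^{*}\Phi^{\dagger}(\Phi(X))=X^{*}X$ by the fixed-point identity. The two sides agree, and the computation for $WW^{*}$ is identical, so Choi's criterion is satisfied and $\Phi(X)\in MD(\Phi^{\dagger})$.

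I expect the last step to be the main obstacle, since it is where the pieces must be assembled: the fixed-point containment of Theorem \ref{MDfix}, the $*$-algebra structure of $MD(\Phi)$, and the Schwarz-equality description of $MD(\Phi^{\dagger})$. In particular I would take care to justify that $MD(\Phi)$ is genuinely closed under the adjoint, so that $X^{*}$ (and hence $X^{*}X$) may be treated as elements of $MD(\Phi)$ in the rewriting above; the two-sided description of the multiplicative domain, which is what makes $MD(\Phi)$ a $*$-algebra, is the point that warrants explicit attention rather than the routine algebraic manipulations that follow from it.
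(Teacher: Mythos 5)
Your proposal is correct, but the central step --- showing $\Phi(X)\in MD(\Phi^{\dagger})$ --- goes by a genuinely different route than the paper's. The paper verifies the defining identity of the multiplicative domain directly: for $A\in MD(\Phi)$ and arbitrary $X,Y$ it computes $\mathrm{Tr}(\Phi^{\dagger}(\Phi(A)X)Y)=\mathrm{Tr}(\Phi(A)X\Phi(Y))=\mathrm{Tr}(\Phi(YA)X)=\mathrm{Tr}(A\Phi^{\dagger}(X)Y)$, concluding $\Phi^{\dagger}(\Phi(A)X)=A\Phi^{\dagger}(X)=\Phi^{\dagger}(\Phi(A))\Phi^{\dagger}(X)$ by Theorem \ref{MDfix}; this uses nothing beyond the trace-duality defining $\Phi^{\dagger}$, and the intermediate bimodule identity $\Phi^{\dagger}(\Phi(A)X)=A\Phi^{\dagger}(X)$ is itself reused later in the paper. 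You instead invoke Choi's Schwarz-equality characterization of the multiplicative domain of the unital map $\Phi^{\dagger}$ and check the two equalities $\Phi^{\dagger}(W^{*}W)=\Phi^{\dagger}(W)^{*}\Phi^{\dagger}(W)$ and $\Phi^{\dagger}(WW^{*})=\Phi^{\dagger}(W)\Phi^{\dagger}(W)^{*}$ at $W=\Phi(X)$, reducing everything to the fixed-point identity applied to $X$, $X^{*}X$ and $XX^{*}$. That buys a shorter verification (two identities at a single element rather than a universally quantified one) and actually places $\Phi(X)$ in the full two-sided multiplicative domain, at the cost of importing Choi's theorem as an external input where the paper is self-contained. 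Your flagged concern about $*$-closure is well placed and applies to both arguments: the paper's displayed definition of $MD$ is literally one-sided, yet the paper's step $\Phi(YA)=\Phi(Y)\Phi(A)$, like your use of $X^{*},X^{*}X\in MD(\Phi)$, requires the standard two-sided version, under which Hermitian-preservation of the CP map makes $MD(\Phi)$ a $*$-algebra. The remaining pieces of your argument (the homomorphism property from the definition, injectivity because Theorem \ref{MDfix} exhibits $\Phi^{\dagger}$ as a left inverse) coincide with the paper's.
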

\begin{proof} That $\Phi$ is a $*$-homomorphism when restricted to $MD(\Phi)$ follows from the definition
of $MD(\Phi)$. The image of $MD(\Phi)$ under $\Phi$ is in $MD(\Phi^{\dagger})$ since for $A\in MD(\Phi)$, $X, Y$ arbitrary we have
\begin{align*} \mathrm{Tr}(\Phi^{\dagger}(\Phi(A)X)Y) & = \mathrm{Tr}(\Phi(A)X\Phi(Y)) \\
&= \mathrm{Tr}(\Phi(YA)X) \\
& = \mathrm{Tr}(A\Phi^{\dagger}(X)Y).\end{align*}
Thus, $\Phi^{\dagger}(\Phi(A)X) = A\Phi^{\dagger}(X) = \Phi^{\dagger}(\Phi(A))\Phi^{\dagger}(X)$ where the last equality comes from applying Theorem \ref{MDfix}.\\
Finally, the mapping from $MD(\Phi)$ to $MD(\Phi^{\dagger})$ is injective since Theorem \ref{MDfix} assures us that $\Phi^{\dagger}$ is an inverse for this map. 
\end{proof}

If $\Phi$ is trace-preserving, then $\Phi^{\dagger}$ is unital, so that 
$$\Phi(MD(\Phi))\subseteq MD(\Phi^{\dagger}).$$ After acting on both sides by $\Phi^{\dagger}$ and recalling that $\Phi^{\dagger}(\Phi(MD(\Phi)))= MD(\Phi)$, we have
\begin{equation}MD(\Phi)\subseteq \Phi^{\dagger}(MD(\Phi^{\dagger})).\end{equation}
Since $\Phi^{\dagger}$ is unital, $I\in MD(\Phi^{\dagger})$ and so $MD(\Phi^{\dagger})$ is a unital $*$-subalgebra, and $\Phi^{\dagger}$ when restricted to $MD(\Phi^{\dagger})$ is a $*$-homomorphism whose image is clearly also a $*$-subalgebra contained in $\mathrm{Range}(\Phi^{\dagger})$.\\
Applying all of the above to the map $\Phi^{C\dagger}$ when $\Phi:M_n(\C)\rightarrow M_m(\C)$ is unital and trace-preserving, we see that there are two algebras naturally associated to the operator system $S_{\Phi} = \mathrm{Range}(\Phi^{C\dagger})$. The first is the algebra generated by $S_{\Phi}$, $S_{\Phi}^{\prime\prime}$ which we have already seen provides us with a sufficient condition for an algebra $\mathcal{A}$ to be privatized by $\Phi^C$, and in this special case is the commutant of $MD(\Phi)$. \\
The second is the algebra $\Phi^{\dagger}(MD(\Phi^{\dagger}))$ which, since $\Phi^{C\dagger}$ is unital, by Theorem \ref{phiphidagger} necessarily contains the only other algebra we might naturally associate with $\Phi^{C\dagger}$: $MD(\Phi^C)$. We have the inclusions
\begin{equation}\label{inclusions} MD(\Phi^C)\subseteq \Phi^{C\dagger}(MD(\Phi^{C\dagger}))\subseteq S_{\Phi}\subseteq S_{\Phi}^{\prime\prime}=MD(\Phi)^{\prime}\end{equation}
which suggests the following natural necessary condition for an algebra $\mathcal{A}$ to be privatized by $\Phi^C$: 
\begin{thm} Let $\Phi:M_n(\C)\rightarrow M_m(\C)$ be unital and trace-preserving, so that $\Phi^C$ is trace-preserving and $\Phi^{C\dagger}$ is unital. If $\mathcal{A}$ is a $*$-subalgebra privatized by $\Phi^C$, necessarily $\mathcal{A}$ is quasiorthogonal to $\Phi^{C\dagger}(MD(\Phi^{C\dagger}))$. 
\end{thm}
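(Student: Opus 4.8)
The plan is to recognize the assertion as a restriction of the privatization criterion already established earlier in this section. First I would invoke the characterization of privacy: for any completely positive map, an algebra is privatized by it exactly when the trace condition Equation \ref{priv} holds between that algebra and the range of the map's adjoint. Applying this with the map taken to be $\Phi^C$, whose domain is $M_n(\C)$ so that the normalizing constant is $\frac{1}{n}$, the hypothesis that $\mathcal{A}$ is privatized by $\Phi^C$ becomes equivalent to
\[ \mathrm{Tr}(AX) = \frac{\mathrm{Tr}(A)\mathrm{Tr}(X)}{n} \]
for every $A \in \mathcal{A}$ and every $X \in \mathrm{range}(\Phi^{C\dagger}) = S_{\Phi}$.

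The second step is to note that $\Phi^{C\dagger}(MD(\Phi^{C\dagger}))$ is contained in $S_{\Phi}$; this is precisely the leftmost inclusion in the chain Equation \ref{inclusions}. Since the trace identity above already holds for all of $S_{\Phi}$, it holds a fortiori for the smaller set $\Phi^{C\dagger}(MD(\Phi^{C\dagger}))$, i.e.
\[ \mathrm{Tr}(AX) = \frac{\mathrm{Tr}(A)\mathrm{Tr}(X)}{n} \]
for every $A \in \mathcal{A}$ and every $X \in \Phi^{C\dagger}(MD(\Phi^{C\dagger}))$. No computation is involved; this is just evaluating the same condition on a subset.

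The remaining step is to confirm that this trace identity genuinely expresses quasiorthogonality, which by definition requires both participants to be unital $*$-subalgebras of $M_n(\C)$. The algebra $\mathcal{A}$ is such by hypothesis. For the other, because $\Phi^C$ is trace-preserving its adjoint $\Phi^{C\dagger}$ is unital, so Theorem \ref{phiphidagger} (together with the discussion preceding the statement) tells us that $\Phi^{C\dagger}$ restricts to a $*$-homomorphism on the unital $*$-subalgebra $MD(\Phi^{C\dagger})$, whence its image $\Phi^{C\dagger}(MD(\Phi^{C\dagger}))$ is a unital $*$-subalgebra of $M_n(\C)$ containing $\Phi^{C\dagger}(I) = I$. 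With both sides unital $*$-subalgebras and the trace condition verified between them, quasiorthogonality follows directly from the definition.

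I do not expect a genuine obstacle here, since the substantive work is carried by the earlier privatization criterion and by Theorem \ref{phiphidagger}. The only points that demand care are bookkeeping: checking that $n$, the domain dimension of $\Phi^C$, is the correct normalization appearing in Equation \ref{priv} and simultaneously the correct constant in the definition of quasiorthogonality within $M_n(\C)$, and verifying that $\Phi^{C\dagger}(MD(\Phi^{C\dagger}))$ is a bona fide unital $*$-algebra rather than a mere operator system, so that the term ``quasiorthogonal'' is even applicable. Both follow from the unitality of $\Phi^{C\dagger}$ and the $*$-homomorphism property supplied by Theorem \ref{phiphidagger}.
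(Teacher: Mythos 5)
Your proposal is correct and follows the same route as the paper: the paper's proof is the single observation that the claim ``follows from the second inclusion in the chain of inclusions \ref{inclusions}'', i.e.\ $\Phi^{C\dagger}(MD(\Phi^{C\dagger}))\subseteq S_{\Phi}$, combined implicitly with the earlier privatization criterion (Equation \ref{priv}) applied to $\Phi^C$. You have simply spelled out the details the paper leaves tacit, including the sanity checks that the normalization is $n$ and that $\Phi^{C\dagger}(MD(\Phi^{C\dagger}))$ is a unital $*$-subalgebra, both of which the paper establishes in the discussion preceding the theorem.
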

\begin{proof} This follows from the second inclusion in the chain of inclusions \ref{inclusions}.
\end{proof}

Finally, we seek to make a connection between private algebras and correctable algebras.

\begin{thm} Let $\Phi:M_n(\C)\rightarrow M_m(\C)$ be a unital, trace-preserving completely positive map. Then $MD(\Phi)$ is the subspace of $C^{n\times n}$ on which $\Phi$ acts as a unitary.
\end{thm}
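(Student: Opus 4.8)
The plan is to first fix the meaning of the statement: to say that $\Phi$ \emph{acts as a unitary} on a subspace $W\subseteq M_n(\C)$ is to say that $\Phi|_W$ preserves the Hilbert--Schmidt inner product, $\mathrm{Tr}(\Phi(X)^*\Phi(Y))=\mathrm{Tr}(X^*Y)$ for all $X,Y\in W$; on a finite-dimensional space this is equivalent to $\Phi|_W$ being an isometry onto its image, hence a unitary in the usual sense. By polarization it is enough to track the norm-preserving condition $\|\Phi(X)\|_{HS}=\|X\|_{HS}$, so I would reduce the whole theorem to identifying the set of $X$ on which $\Phi$ is isometric.

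Next I would introduce the positive self-adjoint operator $T=\Phi^{\dagger}\circ\Phi$ on the Hilbert--Schmidt space $M_n(\C)$. Since $\Phi$ is unital and trace-preserving it is a contraction, as established earlier in Section 2, so for every $X$ we have $\langle X,TX\rangle=\|\Phi(X)\|_{HS}^2\leq\|X\|_{HS}^2=\langle X,X\rangle$; together with $T\geq 0$ this gives $0\leq T\leq I$, i.e. $I-T\geq 0$. The key step is then the chain of equivalences
$$\|\Phi(X)\|_{HS}=\|X\|_{HS}\iff\langle X,(I-T)X\rangle=0\iff (I-T)X=0\iff X\in\mathrm{Fix}(\Phi^{\dagger}\circ\Phi),$$
where the central implication uses that a positive semidefinite operator $P$ satisfies $\langle X,PX\rangle=0$ only when $PX=0$ (write $P=P^{1/2}P^{1/2}$ and note $\|P^{1/2}X\|^2=0$). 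This identifies the set of $X$ on which $\Phi$ is isometric with the fixed-point space $\mathrm{Fix}(\Phi^{\dagger}\circ\Phi)$, and in particular shows this set is genuinely a subspace.

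Finally I would invoke the equalities already assembled in the proof of Theorem \ref{MDtpunital}: for a unital, trace-preserving $\Phi$, Theorems \ref{MDfix} and \ref{unitalmd} squeeze $S_{\Phi}'\subseteq MD(\Phi)\subseteq\mathrm{Fix}(\Phi^{\dagger}\circ\Phi)=S_{\Phi}'$, forcing $MD(\Phi)=\mathrm{Fix}(\Phi^{\dagger}\circ\Phi)$. Combining this with the previous paragraph gives that $MD(\Phi)$ is exactly the subspace of $\C^{n\times n}$ on which $\Phi$ acts as a unitary, which is the claim.

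I expect the only delicate points to be the positivity argument $\langle X,(I-T)X\rangle=0\Rightarrow (I-T)X=0$ and the clean identification of the ``acts as a unitary'' condition with Hilbert--Schmidt isometry; once the bound $0\le \Phi^{\dagger}\circ\Phi\le I$ is in place everything else reduces to results already proved above.
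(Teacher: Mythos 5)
Your proof is correct and follows essentially the same route as the paper: both arguments reduce the claim to showing that the subspace on which $\Phi$ acts isometrically is exactly $\mathrm{Fix}(\Phi^{\dagger}\circ\Phi)$, and then invoke the equality $MD(\Phi)=\mathrm{Fix}(\Phi^{\dagger}\circ\Phi)$ established in the proof of Theorem \ref{MDtpunital}. The only difference is cosmetic: the paper appeals to the canonical splitting of a contraction into unitary and completely non-unitary parts, whereas you prove that identification directly via the positivity argument $0\leq \Phi^{\dagger}\circ\Phi\leq I$ and $\langle X,(I-T)X\rangle=0\Rightarrow (I-T)X=0$, which makes explicit a step the paper leaves to the reader.
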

\begin{proof} Since $\Phi$ is both unital and trace-preserving, it is a contraction. Hence, $M_n(\C)\simeq C^{n\times n}$ splits into two subspaces, $M_n(\C) = U_{\Phi}\oplus CNU_{\Phi}$ in such a way that 
$\Phi \bigl|_{U_{\Phi}}$ is unitary and $\Phi \bigl|_{CNU_{\Phi}}$ is completely non-unitary. \\
By definition, $U_{\Phi}$ is the set on which $\Phi^{\dagger}\circ\Phi$ acts as the identity: the fix point set of $\Phi^{\dagger}\circ\Phi$. Hence, by Theorem \ref{MDtpunital}, 
\begin{equation}MD(\Phi) = \mathrm{Fix}(\Phi^{\dagger}\circ\Phi) = U_{\Phi}\end{equation} and we are done. 
\end{proof}
$MD(\Phi)$ is a $*$-algebra, and so is unitarily equivalent to a direct sum 
\begin{equation} MD(\Phi) \simeq \bigoplus_{k=1}^m I_{i_k}\otimes M_{j_k}(\C).\end{equation}
Moreover, recalling Theorem \ref{phiphidagger}, where now $\Phi$ and $\Phi^{\dagger}$ are both trace-preserving, we conclude that $\Phi$, when restricted to $MD(\Phi)$, is a unital $*$-automorphism into $MD(\Phi^{\dagger})$, with inverse $\Phi^{\dagger}$. Hence $MD(\Phi)\simeq MD(\Phi^{\dagger})$. Since both $\Phi$ and $\Phi^{\dagger}$ are trace-preserving, $MD(\Phi)$ and $MD(\Phi^{\dagger})$ are isomorphic, not just automorphic. Hence there exist unitaries $U,V$ such that
\begin{equation} U\bigl(MD(\Phi)\bigr)U^* = V\bigl(MD(\Phi^{\dagger})\bigr)V^* = \bigoplus_{k=1}^m I_{i_k}\otimes M_{j_k}(\C).\end{equation}
Since $\Phi(MD(\Phi))=MD(\Phi^{\dagger})$, $\Phi(MD(\Phi)) = VU^*\bigl(MD(\Phi)\bigr)UV^*$.  \\
Hence, $MD(\Phi)$ is the set of unitarily correctable elements for $\Phi$: those $A \in M_n(\C)$ such that there exists a unitary $W$ such that $A=W^*\Phi(A)W$. If $\mathcal{A}$ is an algebra satisfying $A=W^*\Phi(A)W$ for all $A\in \mathcal{A}$, then necessarily $\mathcal{A}$ is a subalgebra of $MD(\Phi)$. \\

So, for a trace-preserving and unital CP map $\Phi$, the unitarily correctable algebras are the subalgebras of $MD(\Phi)$, while necessary and sufficient conditions for private channels for $\Phi^C$ are that an algebra be quasiorthogonal to $\Phi^{C\dagger}(MD(\Phi^{C\dagger}))$ and quasiorthogonal to $MD(\Phi)^{\prime}$ respectively.\\ 
So then, for $\Phi$ trace-preserving and unital, we have some relations between unitarily correctable algebras for $\Phi$ and private algebras for the complement, $\Phi^C$.
\subsection{Schur Product Maps}
Consider the example of a Schur product channel, $\Phi: M_n(\C)\rightarrow M_n(\C)$, $\Phi(X)=X\circ C$ where $C$ is an $n\times n$ correlation matrix: a positive semidefinite matrix with $1$s down the diagonal. Clearly, $\Phi$ is unital and trace-preserving. If $C=\sum_{i=1}^{p}\lambda_i \hat{v}_i \hat{v}^*_i$ is a spectral decomposition, then, letting $v_i = \sqrt{\lambda_i}\hat{v}_i$, the Kraus operators of $\Phi$ are $\{K_i = \mathrm{diag}(v_i)\}_{i=1}^p$. \\
Form the matrix $V = \sum_{i=1}^p v_ie_i^*$ whose $i^{th}$ column is $v_i$; then $C=VV^*$. Label the columns of $V^*$ by $\{w_i\}_{i=1}^n$, so that $c_{ij} = w_i^*w_j$, and $C$ is the Gram matrix of the vectors $w_i$. Then, by Proposition \ref{opsofcomp} the Kraus operators for $\Phi^{C}$ are $\{L_i = w_ie_i\}_{i=1}^n$. Then the Kraus operators of $\Phi^{C\dagger}$ are $\{L_i^*\}_{i=1}^n$, so 
\begin{equation}\label{schurdagger} \Phi^{C\dagger}(X) = \sum_{i=1}^n e_i w_i^* X w_i e_i = \sum_{i=1}^n (\langle w_i, Xw_i\rangle) E_{ii}.\end{equation}\\

\begin{thm}\label{MDSchur} The multiplicative domain of a Schur product channel $\Phi(X) = X\circ C$ for some correlation matrix $C$ is the algebra $\mathcal{A} = \{X: x_{ij}\neq 0 \textnormal{ iff } |c_{ij}|^2=1\}$.
\end{thm}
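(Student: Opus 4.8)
The plan is to work entirely with the entrywise description of $\Phi$ and reduce the multiplicative-domain condition to a pointwise condition on the correlation matrix $C$, which I then interpret through the Gram-vector representation $c_{ij} = w_i^* w_j$ recorded above.

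First I would write $\Phi(X)_{ij} = c_{ij} x_{ij}$ and expand both $\Phi(XY)$ and $\Phi(X)\Phi(Y)$ entrywise. Computing
$$\Phi(XY)_{ij} = c_{ij}\sum_k x_{ik} y_{kj}, \qquad \bigl(\Phi(X)\Phi(Y)\bigr)_{ij} = \sum_k c_{ik} c_{kj} x_{ik} y_{kj},$$
and demanding equality for all $Y$ forces, coefficient by coefficient in the free entries $y_{kj}$, the relation $x_{ik}(c_{ij} - c_{ik} c_{kj}) = 0$ for all $i,j,k$. Thus $X$ lies in the multiplicative domain exactly when, for every pair $(i,k)$ with $x_{ik}\neq 0$, one has $c_{ij} = c_{ik} c_{kj}$ for all $j$. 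The same computation applied to $\Phi(YX) = \Phi(Y)\Phi(X)$ yields the transposed relation $x_{kj}(c_{ij} - c_{ik}c_{kj}) = 0$; since $C$ is Hermitian this produces the identical support pattern, so I need only analyze one side.

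The heart of the argument is then to show that, for a correlation matrix, the identity $c_{ij} = c_{ik} c_{kj}$ holding for all $j$ is equivalent to $|c_{ik}|^2 = 1$. One direction is immediate: setting $j = i$ gives $1 = c_{ii} = c_{ik}c_{ki} = c_{ik}\overline{c_{ik}} = |c_{ik}|^2$. For the converse I would invoke the representation $c_{ij} = w_i^* w_j$ with each $w_i$ a unit vector (since $c_{ii}=1$). The equality case of Cauchy--Schwarz shows that $|c_{ik}| = 1$ forces $w_k = e^{i\theta} w_i$ for some real $\theta$; substituting this gives $c_{ik} = e^{i\theta}$ and $c_{kj} = e^{-i\theta} c_{ij}$, whence $c_{ik}c_{kj} = c_{ij}$ for every $j$, as required.

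Combining these, $X$ belongs to $MD(\Phi)$ if and only if $x_{ik}\neq 0 \Rightarrow |c_{ik}|^2 = 1$, which is precisely the defining condition of $\mathcal{A}$. I expect the only genuine obstacle to be the converse half of the equivalence above---extracting the exact phase relation between $w_i$ and $w_k$ from $|c_{ik}|=1$ and verifying that it propagates the factorization uniformly across all columns $j$; everything else is a direct entrywise expansion. As a consistency check, the support pattern $\{(i,k): |c_{ik}|=1\}$ is an equivalence relation on the index set (reflexive on the diagonal, symmetric since $|c_{ik}|=|c_{ki}|$, and transitive because parallel Gram vectors compose), so $\mathcal{A}$ is, up to a permutation, block diagonal and hence a genuine $*$-algebra, exactly as a multiplicative domain must be.
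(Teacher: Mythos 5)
Your proof is correct, and it takes a genuinely different route from the paper's. The paper does not touch the definition of the multiplicative domain directly here: it invokes the general structural fact (established earlier via Theorems \ref{MDfix} and \ref{unitalmd}) that for a unital, trace-preserving map $MD(\Phi)=\mathrm{Fix}(\Phi^{\dagger}\circ\Phi)$, computes that the adjoint of a Schur multiplier is again a Schur multiplier, $\Phi^{\dagger}(X)=X\circ\overline{C}$, and then reads off the fixed points of $X\mapsto X\circ(C\circ\overline{C})$ as the matrices supported where $|c_{ij}|^2=1$. You instead expand the multiplicative-domain identity entrywise, isolate the pointwise condition $x_{ik}(c_{ij}-c_{ik}c_{kj})=0$ by letting $Y$ range over matrix units, and resolve it with the Gram representation $c_{ij}=w_i^*w_j$ together with the equality case of Cauchy--Schwarz. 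Your argument is more self-contained (it needs only positive semidefiniteness and the unit diagonal, not the fixed-point machinery), and it has the side benefit of handling the two-sided multiplicative-domain condition explicitly and of exhibiting directly why the support pattern $\{(i,k):|c_{ik}|=1\}$ is an equivalence relation --- which the paper must establish separately in Proposition \ref{onezerosofC}. What the paper's route buys is brevity given its earlier investment, and a template that applies to any unital trace-preserving channel whose composition $\Phi^{\dagger}\circ\Phi$ is easy to compute. One cosmetic remark applying to both proofs: the set in the theorem statement should really be $\{X: x_{ij}\neq 0 \Rightarrow |c_{ij}|^2=1\}$ (an ``iff'' there would not define a subspace); your derivation, like the paper's, actually produces this implication form.
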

\begin{proof}
By Theorem \ref{MDfix}, since $\Phi$ is unital and trace-preserving, its multiplicative domain is equal to the fixed point set $\mathrm{Fix}(\Phi^{\dagger}\circ \Phi).$ 
\begin{align} \Phi^{\dagger}(X) & = \sum_{i=1}^p \mathrm{diag}(\overline{v}_i) X \mathrm{diag}(v_i)\\
& = \sum_{i,j=1}^n x_{ij}E_{ij}(\sum_{k=1}^p \overline{v}_{ki}v_{kj}) \\
& = X\circ (\overline{C}).\end{align}
Therefore, \begin{equation} (\Phi^{\dagger}\circ\Phi)(X) = X\circ(C\circ\overline{C})\end{equation}
and the fixed points of this are matrices $X$ with non-zero entries only at indices $(i,j)$ where $\overline{C}\circ C$ has the entry $1$--when $|c_{ij}|^2 =1$. \\
\end{proof}

\begin{prop}\label{onezerosofC} Let $G$ be the graph on $n$ vertices where $(i,j)\in E(G)$ if and only if $|c_{ij}|^2=1$. 
Then $G$ is the union of $m$ complete graphs on $k_i$ vertices, where $\sum_{i=1}^m k_i = n$. 
\end{prop}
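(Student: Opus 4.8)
The plan is to reduce the combinatorial claim to the equality case of the Cauchy--Schwarz inequality applied to the Gram representation of $C$. Recall from the discussion preceding the proposition that $C = VV^*$ is the Gram matrix of the vectors $\{w_i\}_{i=1}^n$, with $c_{ij} = w_i^* w_j$; since $C$ is a correlation matrix, $c_{ii} = \|w_i\|^2 = 1$, so each $w_i$ is a unit vector. Hence by Cauchy--Schwarz $|c_{ij}| = |w_i^* w_j| \le \|w_i\|\|w_j\| = 1$, with equality precisely when $w_i$ and $w_j$ are linearly dependent; because both are unit vectors, this forces $w_j = \lambda w_i$ for some scalar $\lambda$ of modulus $1$. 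Thus the condition $|c_{ij}|^2 = 1$ defining the edges of $G$ is exactly the condition that $w_i$ and $w_j$ be parallel.

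First I would record that the relation $i \sim j \iff |c_{ij}|^2 = 1$ is an equivalence relation on $\{1,\ldots,n\}$. Reflexivity is immediate from $c_{ii} = 1$; symmetry follows from $c_{ji} = \overline{c_{ij}}$ (as $C$ is Hermitian), so that $|c_{ji}|^2 = |c_{ij}|^2$; and transitivity is the heart of the matter: if $w_j = \lambda w_i$ and $w_k = \mu w_j$ with $|\lambda| = |\mu| = 1$, then $w_k = \mu\lambda\, w_i$ with $|\mu\lambda| = 1$, so $w_i$ and $w_k$ are parallel and $|c_{ik}|^2 = 1$.

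Granting that $\sim$ is an equivalence relation, I would let $V_1,\ldots,V_m$ be its equivalence classes, of sizes $k_1,\ldots,k_m$, so that $\sum_{i=1}^m k_i = n$ since the classes partition the vertex set. For any two distinct vertices $a,b$ lying in the same class we have $|c_{ab}|^2 = 1$, hence $(a,b)\in E(G)$, so each class induces a complete graph $K_{k_i}$. Conversely, if $a$ and $b$ lie in different classes, then $|c_{ab}|^2 \ne 1$ by definition of the classes, so there is no edge between them. Therefore $G$ is exactly the disjoint union of the complete graphs on the classes, which is the claim.

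The only subtle point is transitivity, and it rests entirely on the equality condition in Cauchy--Schwarz; once the edge relation is identified with parallelism of unit vectors, the equivalence-relation structure---and hence the decomposition of $G$ into disjoint cliques---is forced, with the remaining steps being routine bookkeeping.
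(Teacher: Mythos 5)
Your proof is correct, but it takes a genuinely different route from the paper's. The paper works entirely at the level of matrix entries: it extracts the $3\times 3$ principal submatrix $C[i,j,k]$, uses positive semidefiniteness and a Schur complement to show that $|c_{ij}|^2=|c_{jk}|^2=1$ forces $c_{ik}=c_{jk}/\overline{c}_{ij}$ and hence $|c_{ik}|^2=1$, and then closes with an induction on the size of a connected component to conclude completeness. You instead pass to the Gram representation $c_{ij}=w_i^*w_j$ with unit vectors $w_i$ (legitimately available, since any correlation matrix is a Gram matrix of unit vectors, and the paper has already set up the $w_i$ in the Schur-product discussion), identify the edge relation with parallelism via the equality case of Cauchy--Schwarz, and observe that parallelism of unit vectors is an equivalence relation, so $G$ is the disjoint union of the cliques on its classes. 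The two arguments prove the same transitivity fact; yours is cleaner in that the equivalence-relation framing makes the induction unnecessary and the clique decomposition immediate, while the paper's Schur-complement computation is self-contained at the entry level and additionally produces the explicit multiplicative relation $c_{ik}=c_{ij}c_{jk}$ among the modulus-one entries, which is what later lets the paper write each diagonal block as $v_kv_k^*$ with $v_k$ of unimodular entries. Both are valid; note only that your argument, like the paper's, implicitly uses $|c_{ij}|\le 1$ (from Cauchy--Schwarz) so that $|c_{ij}|^2=1$ is indeed the equality case, which you do state.
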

\begin{proof} We will prove this by showing that the graph on any connected component of $G$ must be complete. Let $G_i$ be a connected component. If $|G_i|=1,2$, it is trivially a complete graph. So, consider the case that $|G_i|=3$. Then, there exist vertices $i,j,k\in G_i$ such that $(i,j), (j,k)\in G_i$. Let $C[i,j,k]$ be the principle submatrix of $C$ on the indices $i,j,k$.\\ 

Since $C$ is positive semidefinite, every principle submatrix must be positive semidefinite as well. Then 
\begin{equation} C[i,j,k] = \begin{pmatrix} 1 & c_{ij} & c_{ik} \\ \overline{c}_{ij} & 1 & c_{jk} \\ \overline{c}_{ik} & \overline{c}_{jk} & 1 \end{pmatrix} \geq 0.\end{equation}

Taking the Schur complement, this is equivalent to 
\begin{equation} \begin{pmatrix} 1 & c_{jk} \\ \overline{c}_{jk} & 1 \end{pmatrix} - \begin{pmatrix} |c_{ij}|^2 & c_{ik}\overline{c}_{ij} \\ c_{ij}\overline{c}_{ik} & |c_{ik}|^2\end{pmatrix} = \begin{pmatrix} 0 & c_{jk}-\overline{c}_{ij}c_{ik} \\ \overline{c}_{jk}-c_{ij}\overline{c}_{ik} & 1-|c_{ik}|^2 \end{pmatrix} \geq 0\end{equation} since $|c_{ij}|^2 = |c_{jk}|^2 = 1$. But this necessitates $c_{ik} = \frac{c_{jk}}{\overline{c}_{ij}}$ and so $|c_{ik}|^2 =1$ as well. \\
Hence, if $|G_i|=3$, $G_i$ is complete.\\
If $|G_i|=n$, and all subgraphs on $n-1$ vertices are complete, $G_i = K_n$. So, by induction, we are done. 
\end{proof}
Hence, up to a permutation that relabels the vertices of $G$ in a way consistent with the decomposition into connected components, $MD(\Phi) = \bigoplus_{i=1}^m M_{k_i}(\C)$.\\
Hence, the algebra generated by $S_{\Phi}$ is $MD(\Phi)^{\prime} = \bigoplus_{i=1}^m \C I_{k_i}$.\\
\begin{prop} Let $\Phi(X)=X\circ C$ be as above, and let $\{w_i\}_{i=1}^n$, $w_i\in \C^p$ be defined as above, so that $c_{ij} = \langle w_i,w_j\rangle$. \\

The multiplicative domain of $\Phi^{C\dagger}$ is the set of matrices $A$ such that all $w_i$ are eigenvectors of $A^*$.
\end{prop}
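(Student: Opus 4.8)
The plan is to use the explicit form of $\Phi^{C\dagger}$ recorded in Equation \ref{schurdagger}, together with the observation that the multiplicative-domain condition here reduces to comparing the diagonal entries of diagonal matrices. First I would note that $\Phi^{C\dagger}(X) = \sum_{i=1}^n \langle w_i, X w_i\rangle E_{ii}$ is always diagonal, and that each $w_i$ is a \emph{unit} vector, since $c_{ii} = \langle w_i, w_i\rangle = 1$. Because $\Phi$ is unital and trace-preserving, $\Phi^C$ is trace-preserving and hence $\Phi^{C\dagger}$ is unital (indeed $\Phi^{C\dagger}(I) = \sum_i c_{ii}E_{ii} = I$); this is what makes $MD(\Phi^{C\dagger})$ well-defined in the first place.

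Next I would unwind the defining condition $\Phi^{C\dagger}(AY) = \Phi^{C\dagger}(A)\Phi^{C\dagger}(Y)$ for all $Y$. Since $\Phi^{C\dagger}(A)$ and $\Phi^{C\dagger}(Y)$ are both diagonal, their product is diagonal with $i$-th entry $\langle w_i, A w_i\rangle \langle w_i, Y w_i\rangle$, while $\Phi^{C\dagger}(AY)$ has $i$-th entry $\langle w_i, AY w_i\rangle$. Thus $A \in MD(\Phi^{C\dagger})$ if and only if, for every index $i$ and every $Y\in M_p(\C)$,
\[ \langle w_i, AY w_i\rangle = \langle w_i, A w_i\rangle \langle w_i, Y w_i\rangle. \]

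Writing $\langle w_i, AY w_i\rangle = \langle A^* w_i, Y w_i\rangle$ and absorbing the scalar $\langle w_i, A w_i\rangle$ into the left slot of the inner product, this becomes $\langle A^* w_i - \overline{\langle w_i, A w_i\rangle}\, w_i,\ Y w_i\rangle = 0$ for all $Y$. The key step is then to observe that, since $w_i \neq 0$, the set $\{Y w_i : Y \in M_p(\C)\}$ is all of $\C^p$; the orthogonality therefore forces $A^* w_i = \overline{\langle w_i, A w_i\rangle}\, w_i$, i.e. $w_i$ is an eigenvector of $A^*$. For the converse, if every $w_i$ is an eigenvector of $A^*$, say $A^* w_i = \mu_i w_i$, then $\langle w_i, A w_i\rangle = \overline{\mu_i}$ and the displayed identity holds trivially, so $A \in MD(\Phi^{C\dagger})$. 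Running both directions gives the claimed equality.

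I expect the only real subtlety to be bookkeeping: keeping the complex conjugates straight when the scalar $\langle w_i, A w_i\rangle$ is moved across the inner product, and correctly justifying the surjectivity of $Y \mapsto Y w_i$, which uses nothing beyond $w_i \neq 0$. Everything else is a direct diagonal-entry comparison, so no deeper input — such as Choi's two-sided characterization of the multiplicative domain via $\Psi(a^*a) = \Psi(a)^*\Psi(a)$ — is needed, although one could alternatively route the argument through that characterization.
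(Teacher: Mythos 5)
Your proof is correct and follows essentially the same route as the paper: both use the diagonal form of $\Phi^{C\dagger}$ from Equation \ref{schurdagger}, reduce the multiplicative-domain condition to an entrywise identity, and exploit the fact that $\{Yw_i : Y\}$ exhausts $\C^p$ to force $A^*w_i$ to be a multiple of $w_i$ (the paper phrases this last step as a statement about the first row of $A$ in a basis beginning with $w_i$, which is the same conclusion). Your explicit treatment of the converse direction and of the conjugate in $A^*w_i = \overline{\langle w_i, Aw_i\rangle}\,w_i$ is if anything slightly more careful than the paper's.
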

\begin{proof} Recall Equation \ref{schurdagger}, 
$$\Phi^{C\dagger}(X) = \sum_{i=1}^n \langle w_i,Xw_i\rangle E_{ii}.$$
Since the range of this map is diagonal, we can consider each diagonal entry at a time. For $A\in MD(\Phi^{C\dagger})$, we require

$$\langle w_i,Aw_i\rangle \langle w_i,Xw_i\rangle = \langle w_i,AXw_i\rangle$$ for all $X$. 
Hence \begin{align} \langle w_i, \langle w_i,Aw_i\rangle Xw_i-AXw_i\rangle & = \langle w_i, \bigl(\langle w_i,Aw_i\rangle I - A\bigr)Xw_i\rangle \\
& = \langle w_i, \bigl(\langle w_i,Aw_i\rangle I - A\bigr)v\rangle \\
& = 0.\end{align}
Hence, expressing $A$ and $v$ an any orthonormal basis $\{x_i\}$ where $x_1=w_i$, 
$$\sum_{j=2}^p a_{1j}v_j = 0$$ for all $v$, and so in such a basis
$$A = \begin{pmatrix} a_{11} & 0 & \ldots & 0 \\ a_{21} & a_{22} & \ldots & a_{2p} \\ \vdots & \vdots & \ddots & \vdots \\ a_{p1} & a_{p2} & \ldots & a_{pp}\end{pmatrix},$$ proving the claim. 
\end{proof}
Let $A \in MD(\Phi^{C\dagger})$ and $A^*w_i = \lambda_i$. Then 
\begin{equation} \Phi^{C\dagger}(A) = \sum_{i=1}^n \langle w_i,Aw_i\rangle E_{ii} = \sum_{i=1}^n \lambda_i E_{ii}.\end{equation}
Recall that $MD(\Phi)$ is, up to a permutation of $C$, $\bigoplus_{i=1}^m M_{k_i}(\C)$, where the $k^{th}$ diagonal $k_i\times k_i$ block of $C$ is a principal submatrix consisting only of entries with modulus $1$. \\
For $(i,j)$ in this block, we have that 
\begin{align} \langle w_i,w_i\rangle & = c_{ii} =1\\
|\langle w_i,w_j\rangle| & = |c_{ij}| =1\end{align}
and so, by Cauchy-Schwarz, $w_i = z_{ij}w_j$ where $z_{ij}$ is a complex number of modulus one.\\
Hence, for $A\in MD(\Phi^{C\dagger})$, if $A^*w_i = \lambda_i w_i$ then necessarily $A^*w_j = \lambda_i w_j$. Hence, $\Phi^{C\dagger}(A) = \bigoplus_{k=1}^m \lambda_k I$ where the direct sum decomposition is the same decomposition as for $MD(\Phi)$. Hence, $$\Phi^{C\dagger}(MD(\Phi^{C\dagger}))\subseteq MD(\Phi)^{\prime} = S_{\Phi}^{\prime\prime}.$$ \\
If the repetitions of $w_i$ enforced by the modulus-$1$ pattern of $C$ are the only instance of linear dependence in the set $\{w_i\}_{i=1}^n$, then the two sets are equal, and we obtain a necessary and sufficient condition for $\Phi^C$ to privatize an algebra $\mathcal{A}$: $\mathcal{A}$ is quasiorthogonal to $MD(\Phi)^{\prime} = \bigoplus_{i=1}^m \C I_{k_i}$.\\
That $C$ decomposes into $m$ principal diagonal blocks, each of which is rank $1$, means that $\mathrm{rank}(C)\leq m$. In the case that $\mathrm{rank}(C)=m$, we have no non-trivial linear dependencies among the $\{w_i\}$. If $\mathrm{rank}(C)< m$, then we have more linear dependencies, and the containment of the two algebras is strict. \\
We can be more precise about the form of $\Phi^{C\dagger}(MD(\Phi^{C\dagger}))$. Label the $m$ $k_i\times k_i$ blocks of all modulus-$1$ entries $C_k$. Then there exists some $v_k\in \C^{k_i}$ such that $C_k = v_kv_k^*$ and $v_k$ has all entries of modulus $1$. Then $C$ has the form
$$C = \begin{bmatrix} v_1v_1^* & A_{12} & \ldots & A_{1m} \\ A_{12}^* & v_2v_2^* & \ldots & A_{2m} \\
\vdots & \vdots & \ddots & \vdots \\ A_{1m}^*& A_{2m}^* & \ldots & v_mv_m^*\end{bmatrix}.$$
\begin{prop} For $C$ as above, each $A_{ij}$ must have the form 
$$A_{ij} = a_{ij} v_iv_j^*.$$
\end{prop}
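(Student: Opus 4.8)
The plan is to exploit the fact that $C$ is the Gram matrix of the vectors $\{w_i\}_{i=1}^n\subseteq \C^p$, with $c_{ij}=\la w_i,w_j\ra$, and to analyze these Gram vectors block by block. First I would recall that by Proposition \ref{onezerosofC} the index set $\{1,\ldots,n\}$ partitions into blocks $I_1,\ldots,I_m$ indexing the connected components of $G$, and that within a single block every entry of $C$ has modulus $1$. The whole proof hinges on upgrading this modulus information into exact linear relations among the $w_i$.

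The key step is the equality case of Cauchy--Schwarz. Fixing a block $I_i$, for $\alpha,\beta\in I_i$ we have $\la w_\alpha,w_\alpha\ra = c_{\alpha\alpha}=1$ and $|\la w_\alpha,w_\beta\ra| = |c_{\alpha\beta}| = 1 = \|w_\alpha\|\,\|w_\beta\|$, so $w_\alpha$ and $w_\beta$ are parallel, differing by a scalar of modulus $1$. Hence all $w_\alpha$ with $\alpha\in I_i$ are unit-modulus scalar multiples of a single unit vector $u_i\in\C^p$; write $w_\alpha = z_\alpha u_i$ with $|z_\alpha|=1$. Setting $(v_i)_\alpha = \overline{z_\alpha}$ recovers the diagonal block, since $(C_i)_{\alpha\beta}=\la w_\alpha,w_\beta\ra = \overline{z_\alpha}z_\beta = (v_iv_i^*)_{\alpha\beta}$, which is exactly the $v_i$ already produced in the preceding discussion.

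With these parametrizations in hand, the off-diagonal blocks follow by direct computation. For $\alpha\in I_i$ and $\beta\in I_j$,
\[
(A_{ij})_{\alpha\beta} = \la w_\alpha,w_\beta\ra = \overline{z_\alpha}z_\beta\,\la u_i,u_j\ra.
\]
Defining the scalar $a_{ij}:=\la u_i,u_j\ra$ and noting that $(v_iv_j^*)_{\alpha\beta}=(v_i)_\alpha\overline{(v_j)_\beta}=\overline{z_\alpha}z_\beta$, we obtain $A_{ij}=a_{ij}\,v_iv_j^*$, as claimed.

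The only delicate point — and the step I expect to require the most care — is the bookkeeping of phases: one must ensure that the unit vector $u_i$ and scalars $z_\alpha$ extracted from block $I_i$ are exactly the data that build the prescribed $v_i$, so that the same $v_i$ appearing in $C_i=v_iv_i^*$ reappears as the left factor of $A_{ij}$. Because both the diagonal and off-diagonal entries are read off from the single decomposition $w_\alpha=z_\alpha u_i$, this consistency is automatic; but it is worth stating explicitly, since an unnoticed global phase ambiguity in each $v_i$ would otherwise be absorbed into a misnormalized $a_{ij}$.
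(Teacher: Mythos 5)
Your proof is correct, but it takes a genuinely different route from the paper's. The paper argues directly from positive semidefiniteness of the block matrix: it takes the Schur complement with respect to the singular diagonal block $v_1v_1^*$, regularized as $(v_1v_1^*+xI)^{-1}$ with $x\rightarrow 0$, and observes that the $\tfrac{1}{x}$ terms blow up unless every component of $A_{12}$ orthogonal to $v_1$ on the left vanishes; a symmetric Schur complement handles the right factor, and an induction on the number of blocks finishes the argument. You instead pass to the Gram-vector picture $c_{\alpha\beta}=\langle w_\alpha,w_\beta\rangle$ and use the equality case of Cauchy--Schwarz inside each modulus-one diagonal block (all of whose pairs are covered thanks to Proposition \ref{onezerosofC}) to write $w_\alpha=z_\alpha u_i$ with $|z_\alpha|=1$, after which every entry of every off-diagonal block is computed in one line. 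This is exactly the mechanism the paper itself deploys a few paragraphs after this proposition, where it derives $w_i=z_{ij}w_j$ from $|c_{ij}|=1$, so your argument stays entirely within the paper's toolkit; it is shorter, avoids the limiting and regularization step, and buys something extra: it exhibits $a_{ij}=\langle u_i,u_j\rangle$ explicitly, which makes the paper's subsequent unproved assertion --- that the $m\times m$ matrix $A$ with entries $a_{ij}$ is a correlation matrix with $\mathrm{rank}(A)=\mathrm{rank}(C)$ --- immediate, since $A$ is the Gram matrix of the unit vectors $u_1,\ldots,u_m$ and these span the same space as the $w_\alpha$. Your closing remark on the phase ambiguity of the $v_i$ is the right point to flag, and your resolution is sound: any global phase discrepancy between your reconstructed $v_i$ and the $v_i$ fixed in the statement is absorbed into $a_{ij}$, which is harmless because the proposition asserts only the existence of some scalar.
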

\begin{proof} Start with the two upper left blocks. $C\geq 0$ so 
$$\begin{bmatrix} v_1v_1^* & A_{12} \\ A_{12}^* & v_2v_2^* \end{bmatrix} \geq 0.$$
Introduce the orthonormal basis $\{w_i\}_{i=1}^{k_1}$ with $w_1=\frac{1}{\sqrt{k_1}}v_1$, and the basis $\{e_i\}_{i=1}^{k_2}$ the standard basis on $\C^{k_2}$. Then $I= \sum_{i=1}^{k_1} w_iw_i^*$, and so taking the Schur complement, and using $\lim_{x\rightarrow 0}(v_1v_1^* + xI)^{-1}$ (since $v_1v_1^*$ is not invertible), we have
\begin{equation} v_2v_2^* - A_{12}^*(v_1v_1^*+x\sum_{i=1}^{k_1} w_iw_i^*)^{-1}A_{12}\geq 0.\end{equation}
Then \begin{align} (v_1v_1^* + x\sum_{i=1}^{k_1}w_iw_i^*)^{-1} & = \frac{1}{k_1+x}w_1w_1^*+\frac{1}{x}\sum_{i=2}^{k_1}w_iw_i^*.\end{align}

Express $A_{12}, A_{12}^*$ in terms of the bases $\{w_i\}, \{e_i\}$ to obtain
\begin{align*} A_{12}^*(v_1v_1^*+x\sum_{i=1}^{k_1}w_iw_i^*)^{-1}A_{12} & = \sum_{i,j=1}^{k_p} \overline{a}_{ji} e_iw_j^*\bigl( \frac{1}{k_1+x}w_1w_1^* + \frac{1}{x}\sum_{k=2}^{k_1}w_kw_k^*\bigr)\sum_{r,s} a_{rs} w_re_s^* \\
& = \sum_{i,j\neq 1,k\neq 1,s} \frac{\overline{a}_{ji}a_{js}}{x} e_ie_s^* + \sum_{i,s} \frac{1}{k_1+x}\overline{a}_{1i}a_{1s}e_ie_s^*.\end{align*}
Then, the $i^{th}$ diagonal entry of $v_2v_2^*$ is $1$, so the Schur complement being positive semidefinite requires 
\begin{equation} 1 \geq \frac{\sum_{j\neq 1}|a_{ji}|^2}{x}+\frac{|a_{1i}|^2}{k_1+x}, \ \ x\rightarrow 0.\end{equation}  
Clearly, the first term on the right hand side blows up, and so $|a_{ji}|^2 = 0$ for all $j\neq 1$. Hence $A_{12}$ when expressed in the basis $e_iw_j^*$ has only its first row non-zero, $a_{1i} \leq i_1$. Hence $A_{12}=\sum_i a_{1i} w_1 e_j^* = w_1\sum_j a_{1j}e_j^* = \frac{1}{\sqrt{k_1}}v_1\sum_j a_{1j}e_j^*=a_{12}v_1 v^*$ for some vector $v$.  \\
A similar analysis taking the Schur complement the other way shows that $A_{12} = uv_2^*$ for some $u$, and s $A_{12} = a_{12}v_1v_2^*$.\\   
Now, assume that $A_{ij} = a_{ij}v_iv_j^*$ for an $(n-1)\times (n-1)$ block matrix matrix. Then, for a matrix of block-size $n\times n$ to have the prescribed form and be positive semdifinite requires

$$C = \begin{bmatrix} v_1v_1^* & a_{12}v_1v_2^* & \ldots & A_{1n}^* \\
a_{12}^* & v_2v_2^* & \ldots & a_{2n}v_2v_n^* \\
\vdots  & \vdots & \ddots & \vdots \\
A_{1n}^* & \overline{a}_{2n} v_2v_n^* & \ldots & v_nv_n^*\end{bmatrix}.$$
Taking the Schur complement 
$$\begin{bmatrix} v_2v_2^* & a_{23}v_2v_3^* & \ldots & a_{2n}v_2v_n^* \\ 
\overline{a}_{23}v_3v_2^* & v_3 v_3^* & \ldots & a_{3n}v_3v_n^* \\
\vdots & \vdots & \ddots & \vdots \\
\overline{a}_{2n} v_nv_2^* & \overline{a}_{3n} v_nv_3^* & \ldots & v_nv_n^*\end{bmatrix} - \begin{bmatrix} \overline{a}_{12}v_2v_1^* \\ \overline{a}_{13}v_3v_1^* \\ \vdots \\ A_{1n}^* \end{bmatrix} \biggl(v_1v_1^* + xI\biggr)^{-1} \begin{bmatrix} a_{12}v_1v_2^* & a_{12}v_1v_3^* & \ldots & A_{1n}\end{bmatrix} \geq 0$$ for all $x\rightarrow 0$.\\
This positivity requires the bottom-right block to be positive, which in turn requires $$v_nv_n^* - A_{1n}^* \biggl(v_1v_1^* + xI\biggr)^{-1}A_{1n} \geq 0$$ as $x\rightarrow 0$.
So, by essentially the same analysis as above, $A_{1n} = a_{1n}v_1v_n^*$.
\end{proof}

Let $$C=\begin{bmatrix} v_1v_1^* & a_{12}v_1v_2^* & \ldots & a_{1m}v_1v_m^* \\ \overline{a}_{12} v_2v_1^* & v_2v_2^* & \ldots & a_{2m}v_2v_m^* \\
\vdots & \vdots & \ddots & \vdots \\
\overline{a}_{1m}v_mv_1^* & \overline{a}_{2m}v_mv_2^* & \ldots & v_mv_m^*\end{bmatrix}$$ and define the matrix $A$ by $A_{ii}=1$, $A_{ij} = a_{ij}$ for $1\leq i,j\leq m$. This is a correlation matrix, and $\mathrm{rank}(C)=\mathrm{rank}(A)$. Necessarily, $a_{ij}<1$ for $i\neq j$, and hence $\mathrm{rank}(A)>1$ so long as $m>1$. Also, clearly $\mathrm{rank}(A)\leq m$. \\
Then, linear dependencies imposed on the Gram vectors of $C$ beyond the ones required to give $C$ its modulus-$1$ pattern are linear dependencies in the columns of $A$. This information is contained in the independence matroid of the columns of $A$: if $A$ has columns $A_1,\ldots,A_m\in \C^m$ then $\mathrm{Mat}(A)$ is the subsets of $\{A_1,\ldots,A_m\}$ where a subset $S=\{A_{i_1},A_{i_2},\ldots,A_{i_l}\}$ is independent if and only if $A_{i_1},A_{i_2},\ldots,A_{i_l}$ are linearly independent. The cycles of $\mathrm{Mat}(A)$ are the minimal dependent sets. If $C = \{A_{i_1},\ldots,A_{i_l}\}$ is a cycle of $\mathrm{Mat}(A)$ then the image of any member of $MD(\Phi^{C\dagger})$ under $\Phi^{C\dagger}$ is $\bigoplus_{i=1}^m \lambda_i I_{k_i}$ with $\lambda_{i_1}=\lambda_{i_2}=\ldots = \lambda_{i_l}$. \\
Hence, the cycles of this matroid control the subalgebra $\Phi^{C\dagger}(MD(\Phi^{C\dagger}))$ for such a channel, and determine when it will equal to $S_{\Phi}^{\prime\prime}$.

\bibliographystyle{plain}
\bibliography{uncertaintybib}

\begin{thebibliography}{10}

\bibitem{ambainis}
Andris Ambainis, Michele Mosca, Alain Tapp, and Ronald De~Wolf.
\newblock Private quantum channels.
\newblock In {\em focs}, pages 547--553, 2000.

\bibitem{boykin}
P~Oscar Boykin and Vwani Roychowdhury.
\newblock Optimal encryption of quantum bits.
\newblock {\em Physical review A}, 67(4):042317, 2003.

\bibitem{choilinear}
Man-Duen Choi.
\newblock Completely positive linear maps on complex matrices.
\newblock {\em Linear algebra and its applications}, 10(3):285--290, 1975.

\bibitem{choischwarz}
MD~Choi.
\newblock A schwarz inequality for positive linear maps on c*-algebras.
\newblock 1974.

\bibitem{chruscinski}
Dariusz Chru{\'s}ci{\'n}ski.
\newblock Positive maps, doubly stochastic matrices and new family of spectral
  conditions.
\newblock In {\em Journal of Physics: Conference Series}, volume 213, page
  012003. IOP Publishing, 2010.

\bibitem{deaett}
Louis Deaett.
\newblock The minimum semidefinite rank of a triangle-free graph.
\newblock {\em Linear Algebra and its Applications}, 434(8):1945--1955, 2011.

\bibitem{devetak}
Igor Devetak.
\newblock The private classical capacity and quantum capacity of a quantum
  channel.
\newblock {\em IEEE Transactions on Information Theory}, 51(1):44--55, 2005.

\bibitem{devshor}
Igor Devetak and Peter~W Shor.
\newblock The capacity of a quantum channel for simultaneous transmission of
  classical and quantum information.
\newblock {\em Communications in Mathematical Physics}, 256(2):287--303, 2005.

\bibitem{jochym}
Tomas Jochym-O’Connor, David~W Kribs, Raymond Laflamme, and Sarah Plosker.
\newblock Private quantum subsystems.
\newblock {\em Physical review letters}, 111(3):030502, 2013.

\bibitem{plosker}
Tomas Jochym-O'Connor, David~W Kribs, Raymond Laflamme, and Sarah Plosker.
\newblock Quantum subsystems: Exploring the complementarity of quantum privacy
  and error correction.
\newblock {\em Physical Review A}, 90(3):032305, 2014.

\bibitem{kraus}
K~Kraus.
\newblock {\em States, Effects and Operations: Fundamental Notions of Quantum
  Theory}.
\newblock Springer-Verlag, 1983.

\bibitem{kks}
Dennis Kretschmann, David~W Kribs, and Robert~W Spekkens.
\newblock Complementarity of private and correctable subsystems in quantum
  cryptography and error correction.
\newblock {\em Physical Review A}, 78(3):032330, 2008.

\bibitem{kribslaflamme}
David Kribs, Raymond Laflamme, and David Poulin.
\newblock Unified and generalized approach to quantum error correction.
\newblock {\em Physical review letters}, 94(18):180501, 2005.

\bibitem{kribswavelets}
David~W Kribs.
\newblock Quantum channels, wavelets, dilations and representations of
  $\mathcal{O}_{n}$.
\newblock {\em Proceedings of the Edinburgh Mathematical Society (Series 2)},
  46(02):421--433, 2003.

\bibitem{kribsspekkens}
David~W Kribs and Robert~W Spekkens.
\newblock Quantum error-correcting subsystems are unitarily recoverable
  subsystems.
\newblock {\em Physical Review A}, 74(4):042329, 2006.

\bibitem{levick}
Jeremy Levick, Tomas Jochym-O’Connor, David~W Kribs, Raymond Laflamme, and
  Rajesh Pereira.
\newblock Private quantum subsystems and quasiorthogonal operator algebras.
\newblock {\em Journal of Physics A: Mathematical and Theoretical},
  49(12):125302, 2016.

\bibitem{minc}
Henryk Minc.
\newblock {\em Nonnegative matrices}.
\newblock Technion-Israel Institute of Technology, Dept. of Mathematics, 1974.

\bibitem{paulsen}
Vern Paulsen.
\newblock {\em Completely bounded maps and operator algebras}, volume~78.
\newblock Cambridge University Press, 2002.

\bibitem{petz}
D{\'e}nes Petz.
\newblock Algebraic complementarity in quantum theory.
\newblock {\em Journal of Mathematical Physics}, 51(1):015215, 2010.

\bibitem{pudlak}
Pavel Pudl{\'a}k.
\newblock Cycles of nonzero elements in low rank matrices.
\newblock {\em Combinatorica}, 22(2):321--334, 2002.

\bibitem{stinespring}
W~Forrest Stinespring.
\newblock Positive functions on c*-algebras.
\newblock {\em Proceedings of the American Mathematical Society},
  6(2):211--216, 1955.

\end{thebibliography}
\end{document}